\pdfminorversion=4     
\documentclass{article} 
\usepackage{pdfpages}
\usepackage[T1]{fontenc}
\usepackage[accepted]{icml2021}
\usepackage{xr-hyper,times,yk,dsfont,hyperref,xcolor,graphicx, subfigure}

\externaldocument{Appendix}

\usepackage{lipsum}
\usepackage{cuted}
\usepackage[font=small,labelfont=bf]{caption}
\usepackage{booktabs}
\setlength\stripsep{3pt plus 1pt minus 1pt}
\usepackage{booktabs}

\usepackage{amsmath,amsfonts,bm}









\def\eqref#1{equation~\ref{#1}}









\def\1{\bm{1}}

\def\eps{{\epsilon}}










\DeclareMathAlphabet{\mathsfit}{\encodingdefault}{\sfdefault}{m}{sl}
\SetMathAlphabet{\mathsfit}{bold}{\encodingdefault}{\sfdefault}{bx}{n}













\DeclareMathOperator*{\argmin}{arg\,min}

\usepackage{hyperref}
\usepackage{url}
\usepackage{bigints}
\usepackage{mathrsfs}
\usepackage{enumitem}
  \setlist{leftmargin=*,noitemsep}

\icmltitlerunning{Outlier-Robust Optimal Transport}

\begin{document}

\twocolumn[
\icmltitle{Outlier-Robust Optimal Transport}
\icmlsetsymbol{equal}{*}

\begin{icmlauthorlist}
\icmlauthor{Debarghya Mukherjee}{to,mitibm}
\icmlauthor{Aritra Guha}{too}
\icmlauthor{Justin Solomon}{mitc,mitibm}
\icmlauthor{Yuekai Sun}{to}
\icmlauthor{Mikhail Yurochkin}{ed,mitibm}
\end{icmlauthorlist}

\icmlaffiliation{to}{Department of Statistics, University of Michigan}
\icmlaffiliation{too}{Department of Statistical Science, Duke University}
\icmlaffiliation{mitc}{MIT CSAIL}
\icmlaffiliation{mitibm}{MIT-IBM Watson AI Lab}
\icmlaffiliation{ed}{IBM Research}

\icmlcorrespondingauthor{Debarghya Mukherjee}{mdeb@umich.edu}

\icmlkeywords{Machine Learning, ICML}

\vskip 0.3in
]

\printAffiliationsAndNotice{}

%

\makeatletter
\newcommand*{\addFileDependency}[1]{
  \typeout{(#1)}
  \@addtofilelist{#1}
  \IfFileExists{#1}{}{\typeout{No file #1.}}
}
\makeatother

\newcommand*{\myexternaldocument}[1]{%
    \externaldocument{#1}%
    \addFileDependency{#1.tex}%
    \addFileDependency{#1.aux}%
}

\myexternaldocument{Appendix}


\begin{abstract}
Optimal transport (OT) measures distances between distributions in a way that depends on the geometry of the sample space. In light of recent advances in computational OT, OT distances are widely used as loss functions in machine learning. Despite their prevalence and advantages, OT loss functions can be extremely sensitive to outliers. In fact, a single adversarially-picked outlier can increase the standard $W_2$-distance arbitrarily. To address this issue, we propose an outlier-robust formulation of OT. Our formulation is convex but challenging to scale at a first glance. Our main contribution is deriving an \emph{equivalent} formulation based on cost truncation that is easy to incorporate into modern algorithms for computational OT. We demonstrate the benefits of our formulation in mean estimation problems under the Huber contamination model in simulations and outlier detection tasks on real data.
\end{abstract}

\section{Introduction}

Optimal transport (OT) is a fundamental problem in applied mathematics. In its original form \citep{monge1781Memoire}, the problem seeks the minimum-cost way to transport mass from a probability distribution $\mu$ on $\cX$ to another distribution $\nu$ on $\cX$. In its original form, Monge's problem proved hard to study, and \citet{kantorovich1942translocation} relaxed \citeauthor{monge1781Memoire}'s formulation of the optimal transport problem to
\begin{equation}
\text{OT}(\mu,\nu) \triangleq \min_{\Pi\in\cF(\mu,\nu)}\Ex_{(X_1,X_2)\sim\Pi}\big[c(X_1,X_2)\big],
\label{eq:OT-problem}
\end{equation}
where $\cF(\mu,\nu)$ is the set of couplings between $\mu$ and $\nu$ (probability distributions on $\cX\times\cX$ whose marginals are $\mu$ and $\nu$) and $c$ is a cost function. In this paper, we assume $c(x, y) \ge 0$ and $c(x, x) = 0$.
Compared to other common measure of distance between probability distributions (\eg $d$-divergences), optimal transport uniquely depends on the geometry of the sample space (through the cost function). 

Recent advancements in optimization for optimal transport \citep{cuturi2013Sinkhorn,solomon2015convolutional,genevay2016stochastic,seguy2018LargeScale} enable its broad adaptation in machine learning applications where geometry of the data is important; see \citep{peyre2018Computational} for a survey. Optimal transport has found applications in natural language processing \citep{kusner2015word,huang2016supervised,alvarez2018gromov,yurochkin2019hierarchical}, generative modeling \citep{arjovsky2017Wasserstein}, clustering \citep{ho2017multilevel}, domain adaptation \citep{courty2014domain,courty2017joint}, large-scale Bayesian modeling \citep{srivastava2018scalable}, anomaly detection \citep{tong2020fixing}, and many other domains. 


Many applications use OT as a loss in an optimization problem of the form:
\begin{equation}\textstyle
\theta\in\argmin_{\theta\in\Theta}\text{OT}(\mu_n,\nu_\theta),
\label{eq:mke}
\end{equation}
where $\{\nu_{\theta}\}_{\theta \in \Theta}$ is a collection of parametric models and $\mu_n$ is the empirical distribution of the samples. Such estimators are called \emph{minimum Kantorovich estimators (MKE)} \citep{bassetti2006minimum}.
They are popular alternatives to likelihood-based estimators, especially in generative modeling. For example, when $\text{OT}(\cdot,\cdot)$ is the Wasserstein-1 distance and $\nu_\theta$ is a generator parameterized by a neural network with weights $\theta$, \eqref{eq:mke} corresponds to the Wasserstein GAN \citep{arjovsky2017Wasserstein}. 

One drawback of optimal transport is its sensitivity to outliers. Because \emph{all} the mass in $\mu$ must be transported to $\nu$, a small fraction of outliers can have an outsized impact. For statistics and machine learning applications in which the data is corrupted or noisy, this is a major issue. For example, the poor performance of Wasserstein GANs in the presence of outliers was noted in the recent works on outlier-robust generative learning with $f$-divergence GANs \citep{chao2018robust,wu2020minimax}.
The problem of outlier-robustness in MKE has not been studied except in two recent works proposing changes to the OT formulation that are challenging to handle computationally \citep{staerman2020ot,balaji2020robust}. Our goal is to derive an outlier-robust OT formulation compatible with existing efficient computational OT methods \citep{peyre2018Computational}.


In this paper, we propose a modification of OT to address its sensitivity to outliers. Our formulation can be used as a loss in \eqref{eq:mke}, so that it is robust to a small fraction of outliers in the data. For simplicity, we consider the $\eps$-contamination model \citep{huber2009Robust}. Let $\nu_{\theta_0}$ be a member of a parametric family  $\{\nu_\theta:\theta\in\Theta\}$ and let
\[
\mu = (1-\eps)\nu_{\theta_0} + \eps \tilde \nu,
\]
where $\mu$ is the data-generating distribution, $\eps > 0$ is the fraction of outliers, and $\tilde \nu$ is the distribution of the outliers. Although the fraction of outliers is capped at $\eps$, the value of the outliers can be arbitrary, so their effect on the optimal transport problem can be arbitrarily large. We modify the problem so that it is more robust to such outliers, targeting the downstream application of learning $\theta_0$ from (samples from) $\mu$ in the $\eps$-contamination model. 

Our main contributions are as follows:
\begin{itemize}
\item We propose a robust OT formulation suitable for statistical estimation in the $\eps$-contamination model using MKE.
\item We show that our formulation is equivalent to the original OT problem with a clipped transport cost. This connection enables us to leverage the voluminous literature on computational optimal transport to develop efficient algorithm to perform MKE robust to outliers.
\item Our formulation enables a new application of optimal transport: outlier detection in data.
\end{itemize}


\section{Problem Formulation}

\subsection{Robust OT for MKE}
To promote outlier-robustness in MKE, we need to allow the corresponding OT problem to ignore outliers in the data distribution $\mu$. The $\eps$-contamination model imposes a cap on the fraction of outliers, so it is not hard to see that $\|\mu - \nu_{\theta_0}\|_{\ \mathrm{TV}} \le \eps$, where $\|\cdot\|_{\ \mathrm{TV}}$ is the total-variation norm defined as $\|\mu\|_{\text{TV}} = \int \frac{1}{2}|\mu(\mathrm{d}x)|$. This suggests we solve a TV-constrained/regularized version of \eqref{eq:mke}:
\begin{equation}
\begin{aligned}
& \min_{\theta\in\Theta,\tilde\mu} & & \text{OT}(\tilde \mu,\nu_\theta) \\
& \subjectto             & & \|\mu - \tilde\mu\|_{\ \mathrm{TV}} \le \eps;  \ \ \tilde \mu \in \mathbf{P},
\end{aligned}
\label{eq:constrained-mke}
\end{equation}
where $\mathbf{P}$ is the set of all probability measures. The constrained version, however, suffers from identification issues. It cannot distinguish between ``clean'' distributions within TV distance $\eps$ of $\nu_{\theta_0}$. To see this, fix $\theta\in\Theta$, and note that the optimal value of \eqref{eq:constrained-mke} is zero whenever $\mu$ is within $\eps$-TV distance of $\nu_\theta$. Thus \eqref{eq:constrained-mke} cannot distinguish between two parameter values $\theta_1$ and $\theta_2$ such that $\|\nu_{\theta_1} - \nu_{\theta_2}\|_{\TV} \le \eps$. This makes it unsuitable as a loss function for statistical estimation, because it cannot lead to a consistent estimator. As an alternative, its regularized counterpart does not suffer from this issue:
\begin{equation}\label{eq:tvregularized}
\min_{\substack{\theta\in\Theta \\ s: \mu + s \in \mathbf{P}}} \text{OT}(\mu+s,\nu_\theta) + \lambda\|s\|_{\ \mathrm{TV}},
\end{equation}
where $\lambda > 0$ is a regularization parameter. Note that, the constrained and Lagrangian formulations are equivalent, but the equivalence depends on the two distributions in the arguments of the Wasserstein distance. In other words, as we vary the distributions (keeping $\eps$ fixed), the equivalent $\lambda$ will change. In our formulation, we are fixing $\lambda$ and varying the distributions, so the solution paths are different, making the parameter identifiable. In the rest of this paper, we work with the TV-regularized formulation \eqref{eq:tvregularized}.

The main idea of our formulation is to allow for modifications of $\mu$, while penalizing their magnitude and ensuring that the modified $\mu$ is still a probability measure. Below we formulate this intuition in an optimization problem titled ROBOT (ROBust Optimal Transport):

\textbf{Formulation 1:}
\begin{align}
\begin{aligned}
& \min_{\pi, s} & & \iint c(x, y) \ \pi(x, y) \ dx \ dy + \lambda \|s\|_{\mathrm{TV}} \\
& \subjectto & &  \int \pi(x, y) \ dy = \mu(x) + s(x) \geq 0 \\ 
& & & \int \pi(x, y) \ dy  = \nu(y) \\
& & & \int s(\mathrm{d}x)  = 0.
\end{aligned} 
\label{eq:robot1-cts}
\end{align}
where $\pi$ is a density function on $\cX \times \cX$. The first and the last constraints ensure that $\mu + s$ is a valid probability measure, while $\lambda \|s\|_{\text{TV}}$ penalizes the amount of modifications in $\mu$. We can identify exact locations of outliers in $\mu$ by inspecting $\mu+s$, i.e.\ if $\mu(x) + s(x) = 0$, then $x$ got eliminated and is an outlier. We will use this property to propose an outlier detection method.

ROBOT, unlike classical OT, guarantees that an adversarially-picked outliers cannot increase the (robust) transport distance arbitrarily. Let $\tilde \mu = (1-\eps)\mu + \eps \mu_c$, i.e., $\tilde \mu$ is $\mu$ contaminated with outliers from $\mu_c$, and let $\nu$ be an arbitrary measure; in MKE, $\tilde \mu$ is the contaminated data and $\nu$ is the model we learn. The adversary can arbitrarily increase $\text{OT}(\tilde \mu, \nu)$ by manipulating the outlier distribution $\mu_c$. For ROBOT, we have the following bound:
\begin{theorem}
\label{thm:bound}
Let $\tilde \mu = (1-\eps)\mu + \eps \mu_c\,$ for some $\eps\in[0,1)$. Then,
\begin{align}
\label{eq:bound}
\mathrm{ROBOT}(\tilde \mu, \nu) & \le \min\left\{\mathrm{OT}(\mu, \nu) + \lambda \eps\|\mu - \mu_c\|_{\mathrm{TV}}, \right. \notag \\
& \qquad \qquad  \left. \lambda \|\tilde \mu - \nu\|_{\mathrm{TV}}, \mathrm{OT}(\tilde \mu, \nu)\right\},
\end{align}
\end{theorem}
This bound has two key takeaways: since TV norm of distributions is bounded by 1, the adversary can not increase ROBOT$(\tilde \mu, \nu)$ arbitrarily; in the absence of outliers, ROBOT is bounded by classical OT. See Appendix \ref{sec:theorem_bnd} for the proof.




\paragraph{Related work.} We note a connection between \eqref{eq:robot1-cts} and unbalanced OT (UOT) \citep{UOT,chizat2018scaling}. UOT is typically formulated by replacing the TV norm with KL$(\mu + s | \mu)$ and adding an analogous term for $\nu$. \citet{chizat2018scaling} studied entropy regularized UOT with various divergences penalizing marginal violations. Optimization problems similar to \eqref{eq:robot1-cts} have also been considered outside of ML \citep{piccoli2014generalized,liero2018optimal}. \citet{balaji2020robust} use UOT with $\chi^2$-divergence penalty on marginal violations to achieve outlier-robustness in generative modeling. Another relevant variation of OT is partial OT \citep{figalli2010optimal,caffarelli2010free}. It may also be considered for outlier-robustness but has a drawback of forcing mass destruction rather than adjusting marginals to ignore outliers when they are present. \citet{staerman2020ot} take a different path: they replace the expectation in the Wasserstein-1 dual with a median-of-means to promote robustness. 
It is unclear what is the corresponding primal problem, making their formulation hard to interpret as an optimal transport problem.

A major challenge with the aforementioned methods, including our Formulation 1, is the large scale implementation of the optimization problem. \citet{chizat2018scaling} propose a Sinkhorn-like algorithm for entropy regularized UOT, but it is not amenable to stochastic optimization. \citet{balaji2020robust} propose a stochastic optimization algorithm based on the UOT dual, but it requires two additional neural networks (total of four including dual potentials) to parameterize modified marginal distributions (i.e., $\mu + s$ and analogous one for $\nu$). Optimizing with a median-of-means in the objective function \citep{staerman2020ot} is also challenging. The key contribution of our work is a formulation \emph{equivalent} to \eqref{eq:robot1-cts}, which is \emph{easily compatible} with the large body of classical OT optimization techniques \citep{cuturi2013Sinkhorn,solomon2015convolutional,genevay2016stochastic,seguy2018LargeScale}.

\paragraph{More efficient equivalent formulation.} At a first glance, there are two issues with \eqref{eq:robot1-cts}: it appears asymmetric, and it is unclear if it can be optimized efficiently. Below we present an \emph{equivalent} formulation that is free of these issues: \\
\textbf{Formulation 2:}
\begin{equation}
\begin{aligned}
& \min_{\Pi \in \cF^+(\bbR^{d} \times \bbR^{d})} & & \bbE_{(X,Y) \sim \Pi}\left[C_{\lambda}(X, Y)\right]  \\
& \subjectto & & X \sim \mu, \ \ Y \sim \nu \,.
\end{aligned}
\label{eq:robot2-cts}
\end{equation}
where $C_{\lambda}$ is the \emph{truncated cost} function defined as $C_{\lambda}(x, y) = \min\left\{c(x, y), 2\lambda\right\}$. Looking at \eqref{eq:robot2-cts}, it is not apparent that it adds robustness to MKE, but it is symmetric, easy to combine with entropic regularization by simply truncating the cost, and benefits from stochastic optimization algorithms \citep{genevay2016stochastic,seguy2018LargeScale}. 

This formulation also has a distant relation to the idea of loss truncation for achieving robustness \citep{shen2019learning}. \citet{pele2009fast} consider the Earth Mover Distance (discrete OT) with truncated cost to achieve computational improvements; they also mentioned its potential to promote robustness against outlier noise but did not explore this direction. 

In Section \ref{sec:theory}, we establish an \emph{equivalence} between the two ROBOT formulations, \eqref{eq:robot1-cts} and \eqref{eq:robot2-cts}. This equivalence allows us to obtain an efficient algorithm based on \eqref{eq:robot2-cts} for robust MKE. We also provide a simple procedure for computing the optimal $s$ in \eqref{eq:robot1-cts} from the solution of \eqref{eq:robot2-cts}, enabling a new OT application: outlier detection. We verify the effectiveness of robust MKE and outlier detection in our experiments in Section \ref{sec:experiments}. Before presenting the equivalence proof, however, we formulate the discrete analogs of the two ROBOT formulations for their practical value.


\subsection{Discrete ROBOT formulations}
In practice, we typically encounter samples from distributions, rather then the distributions themselves. Sampling is also built into stochastic optimization. In this subsection, we present the discrete versions of the ROBOT formulations. The key detail is that, in \eqref{eq:robot1-cts}, $\mu,\nu$ and $s$ are all supported on $\mathbb{R}^d$, while in the discrete case the empirical measures $\mu_n \in \Delta^{n-1}$ and $\nu_m \in \Delta^{m-1}$ are supported on a set of points ($\Delta^r$ is the unit probability simplex in $\reals^r$). As a result, to formulate a discrete version of \eqref{eq:robot1-cts}, we need to augment $\mu_n$ and $\nu_m$ with each others' supports. To be precise, let $\mathrm{supp}(\mu_n)=\{X_1,\dots,X_n\}$ and $\mathrm{supp}(\nu_m)=\{Y_1,\dots,Y_m\}$. Define $\cC = \{Z_1, Z_2, \dots, Z_{m+n}\} = \{X_1, \dots, X_n, Y_1, \dots, Y_m\}$.
Then discrete analog of \eqref{eq:robot1-cts} is

\textbf{Formulation 1 (discrete):}
\begin{align}
\begin{aligned}
& \min_{\Pi, \bs} & & \langle C_{aug},\Pi \rangle + \lambda \left[\|s_1\|_1 + \|t_1\|_1\right] \\
& \subjectto & & \Pi 1_{m+n} =  
\begin{bmatrix}
\mu_n + s_1 \\ t_1
\end{bmatrix},\quad
\Pi^{\top}1_{m+n} =
\begin{bmatrix}
0 \\\nu_m
\end{bmatrix}
\\
& & & \Pi \succeq 0,\quad
1_{m+n}^{\top}\bs = 0,
\end{aligned} 
\label{eq:robot1-d}
\end{align}
where $C_{aug} \in \bbR^{(m+n) \times (m+n)}$ is the augmented cost function $C_{aug, i, j} = c(Z_i, Z_j)$ ($c$ is the ground cost, e.g., squared Euclidean distance), $\bs = (s_1, t_1)$ and $1_r$ is the vector all ones in $\reals^r$. The TV norm got replaced with its discrete analog, the $L_1$ norm. Similarly to its continuous counterpart, the optimization problem is harder than typical OT due to additional constraint optimization variable $\bs$ and increased cost matrix size.


The discrete analog of \eqref{eq:robot2-cts} is straightforward:

\textbf{Formulation 2 (discrete):}
\begin{align}
\begin{aligned}
& \min_{\Pi\in\reals^{n\times m}} & & \langle C_{\lambda},\Pi \rangle  \\
& \subjectto & & \Pi1_{n} = \mu_n,\quad
\Pi^{\top}1_{m} = \nu_m,\quad
\Pi \succeq 0,
\end{aligned}
\label{eq:robot2-d}
\end{align}
where $C_{\lambda,i,j} = \min\{c(X_i, Y_j), 2\lambda\}$. As in the continuous case, it is easy to adapt modern (regularized) OT solvers without any computational overhead, and formulations of  \eqref{eq:robot1-d} and \eqref{eq:robot2-d} are equivalent. It is also possible to recover $\bs$ of \eqref{eq:robot1-d} from the solution of \eqref{eq:robot2-d} to perform outlier detection.

\paragraph{Two-sided formulation.} So far we have assumed that one of the input distributions does not have outliers, which is the setting of MKE, where the clean distribution corresponds to the model we learn. In some applications, both distributions may be corrupted. To address this case, we provide an \emph{equivalent} two-sided formulation, analogous to UOT with TV norm:

\textbf{Formulation 3 (two-sided):}
\begin{equation}
\begin{split}
\begin{aligned}
\min_{\Pi, \bs_1 , \bs_2} & \ \ \ \langle C_{aug},\Pi\rangle + \lambda \left[\|s_1\|_1 + \|t_1\|_1 + \|s_2\|_1 + \|t_2\|_1\right] \\
\subjectto & \hspace{5em}\Pi 1_{m+n} = 
\begin{bmatrix}
\mu_n+s_1 \\ t_1
\end{bmatrix} \\
& \hspace{5em} \Pi^{\top}1_{m+n} = 
\begin{bmatrix}
s_2 \\ \nu_m + t_2 
\end{bmatrix}
\\
& \Pi \succeq 0,\quad 1_{m+n}^{\top}\bs_1 = 0,\quad 1_{m+n}^{\top} \bs_2 = 0
\end{aligned}
\label{eq:F_3}
\end{split}
\end{equation}
where $\bs_1 = (s_1^{\top}, t_1^{\top})^{\top}$ and $\bs_2 = (s_2^{\top}, t_2^{\top})^{\top}$. 


\section{Equivalence of the ROBOT formulations}
\label{sec:theory}
In this section, we present our main theorem, which demonstrates the equivalence between two formulations of the robust optimal transport:
\begin{theorem}
\label{thm:main_thm}
For any two measures $\mu$ and $\nu$, \text{ROBOT}$(\mu, \nu)$ has same value for both the formulations, i.e., Formulation 1 is equivalent to Formulation 2 for the discrete case. Additionally, if the (non-truncated) cost function $c(\cdot,\cdot)$ is a metric, then the equivalence of the two formulations also holds for the continuous case. Moreover, we can recover optimal coupling of one formulation from the other. 
\end{theorem}
Below we sketch the proof of this theorem and highlight some important techniques used in the proof. We focus on the discrete case as it is more intuitive and has concrete practical implications in our experiments. A complete proof can be found in Appendix \ref{sec:proofs}. Please also see Appendix \ref{sec:proof_two_sided} for the proof of equivalence between Formulations 1, 2 and 3 in the discrete case.

\subsection{Proof sketch}
In the remainder of this section, we consider the discrete case, i.e., \eqref{eq:robot1-d} for Formulation 1 (F1) and \eqref{eq:robot2-d} for Formulation 2 (F2). Suppose $\Pi^*_{2}$ is an optimal solution of 
F2. Then we construct a feasible solution $\Pi^*_1, \bs^*_1 = (s^*_1, t^*_1)$ of F1 based on $\Pi^*_2$ with the same value of the objective function as F2 and claim that $(\Pi^*_1, \bs^*_1)$ is an optimal solution. We prove the claim by contradiction: if $(\Pi^*_1, \bs^*_1)$ is not optimal, then there exists another pair $(\tilde \Pi_1, \tilde \bs_1)$ which is optimal for F1 with strictly less objective value. We then construct another feasible solution $\Pi^*_{2,new}$ of Formulation 2 which has the same objective value as of $(\tilde \Pi_1, \tilde \bs_1)$ for F1. This implies $\Pi^*_{2,new}$ has strictly less objective value for F2 than $\Pi^*_2$, which is a contradiction. 

The two main steps of this proof are (1) constructing a feasible solution of F1 starting from a feasible solution of F2 and (2) showing that the solution constructed is indeed optimal for F1. Hence step (1) gives a recipe to construct an optimal solution of F1 starting from an optimal solution of F2. We elaborate the first point in the next subsection, which has practical implications for outlier detection. The other point is more technical; interested readers may go through the proof in Appendix \ref{sec:proof_discrete}.  

\begin{algorithm}
\caption{Generating optimal solution of F1 from F2}
\label{algo:f1-f2}
\begin{algorithmic}[1]
	\STATE Start with $\Pi^*_2 \in \bbR^{n \times m}$, an optimal solution of Formulation 2. 
	\STATE Create an augmented matrix $\Pi \in \bbR^{m+n \times m+n}$ with all $0$. Divide $\Pi$ into four blocks: 
	$$
	\Pi = \begin{bmatrix}
	\underbrace{\Pi_{11}}_{n \times n} & \underbrace{\Pi_{12}}_{n \times m} \\
	\underbrace{\Pi_{21}}_{m \times n} & \underbrace{\Pi_{22}}_{m \times m}
	\end{bmatrix}
	$$
    \STATE Set $\Pi_{12} \leftarrow \Pi^*_2$ and collect all the indices $\cI = \{(i, j): C_{i, j} > 2 \lambda \}$. 
    \STATE Set $\Pi_{12}(i, j) \leftarrow 0$ for $(i, j) \in \cI$. 
    \STATE Set $\Pi_{22}(j, j) \leftarrow \sum_{i=1}^n \Pi^*_2(i, j)\mathds{1}_{(i, j) \in \cI}$ for all $1 \le j \le m$ and set $\Pi^*_1 \leftarrow \Pi$. 
    \STATE Set $s^*_1(i) = -\sum_{j=1}^m \Pi^*_2(i, j)\mathds{1}_{(i, j) \in \cI}$ for all $1 \le i \le n$.
    \STATE Set $t_1^*(j) = \Pi_{22}(j, j)$ for all $1 \le j \le m$. 
\STATE return $\Pi^*_1, s^*_1, t^*_1$.  
\end{algorithmic}		
\end{algorithm}

\subsection{Going from Formulation 2 to Formulation 1}

Let $\Pi^*_2$ (respectively $\Pi^*_1$) be an optimal solution of F2 (respectively F1). Recall that $\Pi_1^*$ has dimension $(m+n) \times (m+n)$. From the column sum constraint in F1, we need to take the first $n$ columns of $\Pi^*_1$ to be exactly $0$, whereas the last $m$ columns must sum up to $\nu_m$. For any matrix $A$, we denote by $A[(a:b) \times (c:d)]$ the submatrix consisting of rows from $a$ to $b$ and columns from $c$ to $d$. Our main idea is to put a modified version of $\Pi^*_2$ in $\Pi^*_1[(1:n) \times (n+1:m+n)]$ and make $\Pi^*_1[(n+1:m+n) \times (n+1:m+n)]$ diagonal. First we describe how to modify $\Pi^*_2$. Observe that, if for some $(i, j)$ $C_{i, j} > 2 \lambda$, we expect $X_i \in \mathrm{supp}(\mu_n)$ to be an outlier resulting in high transportation cost, which is why we truncate the cost in F2. Therefore, to get an optimal solution of F1, we make the corresponding value of optimal plan $0$ and dump the mass into the corresponding slack variable $t_1^*$ in the diagonal of the bottom right submatrix. This changes the row sum, which is taken care of by $s_1^*$. But, as we are not moving this mass outside the corresponding column, the column sum of $\Pi^*_1[(1:(m+n)):((n+1):(m+n))]$ remains same as column sum of $\Pi^*_2$, which is $\nu_n$. We summarize this procedure in Algorithm \ref{algo:f1-f2}.
\begin{figure}[ht!]
    \centering
    \includegraphics[scale=0.39]{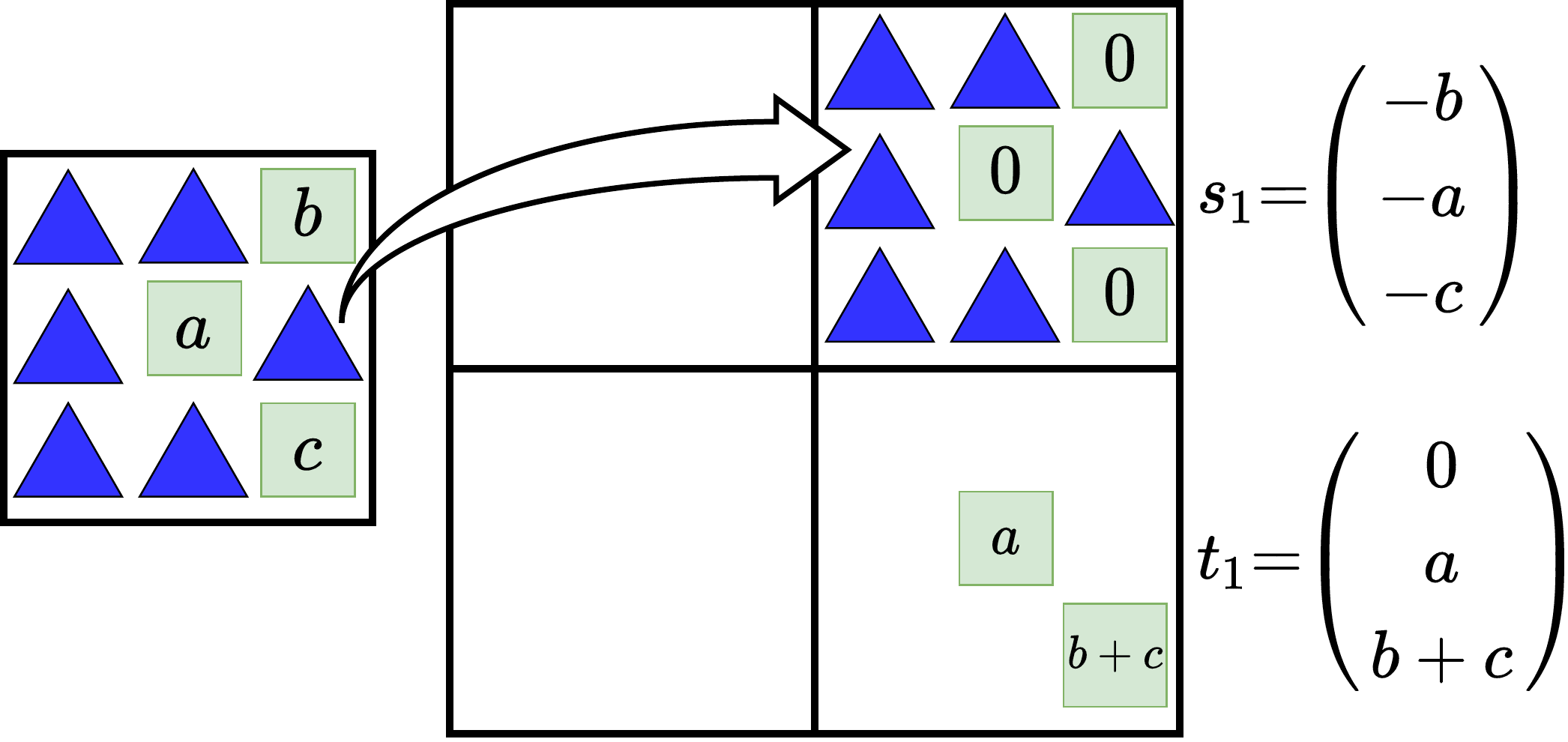}
    \caption{Constructing optimal solution of Formulation 1 from optimal solution of Formulation 2.}
    \label{fig:construction}
\end{figure}

\paragraph{Example.} In Figure \ref{fig:construction}, we provide an example to visualize the construction. On the left, we have $\Pi^*_2$, an optimal solution of Formulation 2. The blue triangles denote the positions where the corresponding cost value is $\le 2 \lambda$, and light-green squares denote the positions where the corresponding value of the cost matrix is $> 2 \lambda$. To construct an optimal solution $\Pi^*_1$ of Formulation 1 from this $\Pi^*_2$, we first create an augmented matrix of size $6 \times 6$. We keep all the entries of the left $6 \times 3$ sub-matrix as $0$ (in this picture blank elements indicate $0$). On the right submatrix, we put $\Pi^*_2$ into the top-right block, but remove the masses from light-green squares, i.e.\ where cost value is $> 2\lambda$, and put it in the diagonal entries of the bottom right block as shown in Figure \ref{fig:construction}. This mass contributes to the slack variables $s_1$ and $t_1$, and this augmented matrix along with $s_1, t_1$ give us an optimal solution of Formulation 1. 

\subsection{Outlier detection with ROBOT}
Our construction algorithm has practical consequences for outlier detection. Suppose we have two datasets, a clean dataset $\nu_m$ (i.e., has no outliers) and an outlier-contaminated dataset $\mu_n$. We can detect the outliers in $\mu_n$ without directly solving costly Formulation 1 by following Algorithm \ref{algo:outlier}. In this algorithm, $\lambda$ is a regularization parameter that can be chosen via cross-validation or heuristically (see Section \ref{sec:exp-outliers} for an example). In Section \ref{sec:exp-outliers}, we use this algorithm to perform outlier detection on image data. 
\paragraph{Outlier detection with entropic regularization.}
Algorithm \ref{algo:f1-f2} allows us to recover solution of Formulation 1, which ultimately is used for outlier detection in Algorithm \ref{algo:outlier}, by solving the simpler truncated cost Formulation 2 problem in \eqref{eq:robot2-d}. Similarly to the regular OT, it can be solved exactly with a linear program solver---\citet{pele2009fast} propose a faster exact solution based on min-cost-flow solvers benefiting from the cost truncation---or approximately using entropic regularization techniques, e.g.\ Sinkhorn algorithm \citep{cuturi2013Sinkhorn}. In the following lemma, we show that Algorithm \ref{algo:f1-f2} recovers a meaningful approximate solution of Formulation 1 from an approximate solution of Formulation 2 obtained with entropy-regularized OT solvers.

\begin{lemma}
\label{lem:entropy-f2-f1}
Let $\Pi_{2, \alpha}^*$ be a solution of the entropy regularized version of \eqref{eq:robot2-d}: 
\begin{align}
\begin{aligned}
& \argmin_{\Pi\in\reals^{n\times m}} & & \langle C_{\lambda},\Pi \rangle  + \alpha H(\Pi)\\
& \subjectto & & \Pi1_{n} = \mu_n,\quad
\Pi^{\top}1_{m} = \nu_m,\quad
\Pi \succeq 0,
\end{aligned}
\label{eq:robot2-d-entropy}
\end{align}
and let $\left(\Pi_{1, \alpha}^*, \bs^*_{1, \alpha}\right)$ be the corresponding approximate solution of Formulation 1 recovered from $\Pi^*_{2, \alpha}$ by Algorithm \ref{algo:f1-f2}. Then 
$$
\| \Pi^*_{1, \alpha} - \Pi^*_{1}\|_F + \|\bs^*_{1, \alpha} - \bs^*_{1}\|_2 \rightarrow 0
$$ 
as $\alpha \to 0$, where $\left(\Pi^*_{1}, \bs^*_{1}\right)$ is the exact solution of Formulation 1 in \eqref{eq:robot1-d}.
\end{lemma}
When the solution of \eqref{eq:robot2-d} is non-unique, then the solution of \eqref{eq:robot2-d-entropy} converges to the solution of \eqref{eq:robot2-d} with maximum entropy (see Proposition 4.1 of \citet{peyre2018Computational}) and  consequently recovers the corresponding maximum entropy solution $\left(\Pi_{1}^*, \bs^*_{1}\right)$ of \eqref{eq:robot1-d} by Algorithm \ref{algo:f1-f2} in the limit. Please find the proof in Appendix \ref{sec:lemma_sinkhorn}. 

\begin{algorithm}
\caption{Outlier detection in contaminated data}\label{algo:outlier}
\begin{algorithmic}[1]
	\STATE Start with $\mu_n$ (contaminted data) and $\nu_m$ (clean data).
	\STATE Solve Formulation 2 and obtain $\Pi^*_2$ using a suitable value of $\lambda$.
    \STATE Use Algorithm \ref{algo:f1-f2} to obtain $\Pi^*_1, s^*_1, t^*_1$ from $\Pi^*_2$.  
    \STATE Find $\cI$, the set of all the indices where $\mu_n + s_1^* = 0$.  
\STATE Return $\cI$ as the indices of outliers in $\mu_n$.  
\end{algorithmic}		
\end{algorithm}
 
\section{Empirical studies}
\label{sec:experiments}

To evaluate effectiveness of ROBOT, we consider the task of robust mean estimation under the Huber contamination model. The data is generated from $(1-\eps)\mathcal{N}(\eta_0,I_d) + \eps\mathcal{N}(\eta_1,I_d)$ and the goal is to estimate $\eta_0$. Prior work has advocated for using $f$-divergence GANs \citep{chao2018robust,wu2020minimax} for this problem and pointed out inefficiencies of Wasserstein GAN in the presence of outliers. We show that our robust OT formulation allows to estimate the uncontaminated mean $\eta_0$ comparably or better than a variety of $f$-divergence GANs. We also use this simulated setup to study sensitivity to the cost truncation hyperparameter $\lambda$.

In our second experiment, we present a new application of optimal transport enabled by ROBOT. Suppose we have collected a curated dataset $\nu_m$ (i.e., we know that it has no outliers)---such data collection is expensive, and we want to benefit from it to automate subsequent data collection. Let $\mu_n$ be a second dataset collected ``in the wild,'' i.e., it may or may not have outliers. We demonstrate how ROBOT can be used to identify outliers in $\mu_n$ using the curated dataset $\nu_m$.

\begin{table*}[t]
\caption{Robust mean estimation with GANs using different distribution divergences. True mean is $\eta_0 = \mathbf{0}_5$; sample size $n=1000$; contamination proportion $\eps=0.2$. We report results over 30 experiment restarts.}
\label{table:robogan}
\begin{center}
\begin{tabular}{lccccc}
 \toprule
 Contamination & JS Loss & SH Loss & RKL Loss & ROBOT & UOT\\
\midrule
 $\cN(0.1 \cdot \mathbf{1_5}, I_5)$   & \textbf{0.09} $\pm$ 0.03 & 0.11 $\pm$ 0.03 &   0.115 $\pm$ 0.03 & 0.1 $\pm$ 0.03 & 0.1 $\pm$ 0.04 \\
 $\cN(0.5 \cdot \mathbf{1_5}, I_5)$  &   0.23 $\pm$ 0.04 & 0.24 $\pm$ 0.05 & 0.24 $\pm$ 0.05 & \textbf{0.117} $\pm$ 0.03 & 0.2 $\pm$ 0.04 \\
 $\cN(1 \cdot \mathbf{1_5}, I_5)$ & 0.43 $\pm$ 0.05 & 0.43 $\pm$ 0.06& 0.43 $\pm$ 0.06 & 0.261 $\pm$ 0.06 & \textbf{0.25} $\pm$ 0.05 \\
 $\cN(2 \cdot \mathbf{1_5}, I_5)$ & 0.67 $\pm$ 0.07 & 0.67 $\pm$ 0.08 & 0.67 $\pm$ 0.08 & 0.106 $\pm$ 0.03 & \textbf{0.1} $\pm$ 0.03 \\
\bottomrule
\end{tabular}
\end{center}
\end{table*}


\begin{figure*}[htp]
\centering
\subfigure[Varying proportion of contamination]{\includegraphics[scale=0.31]{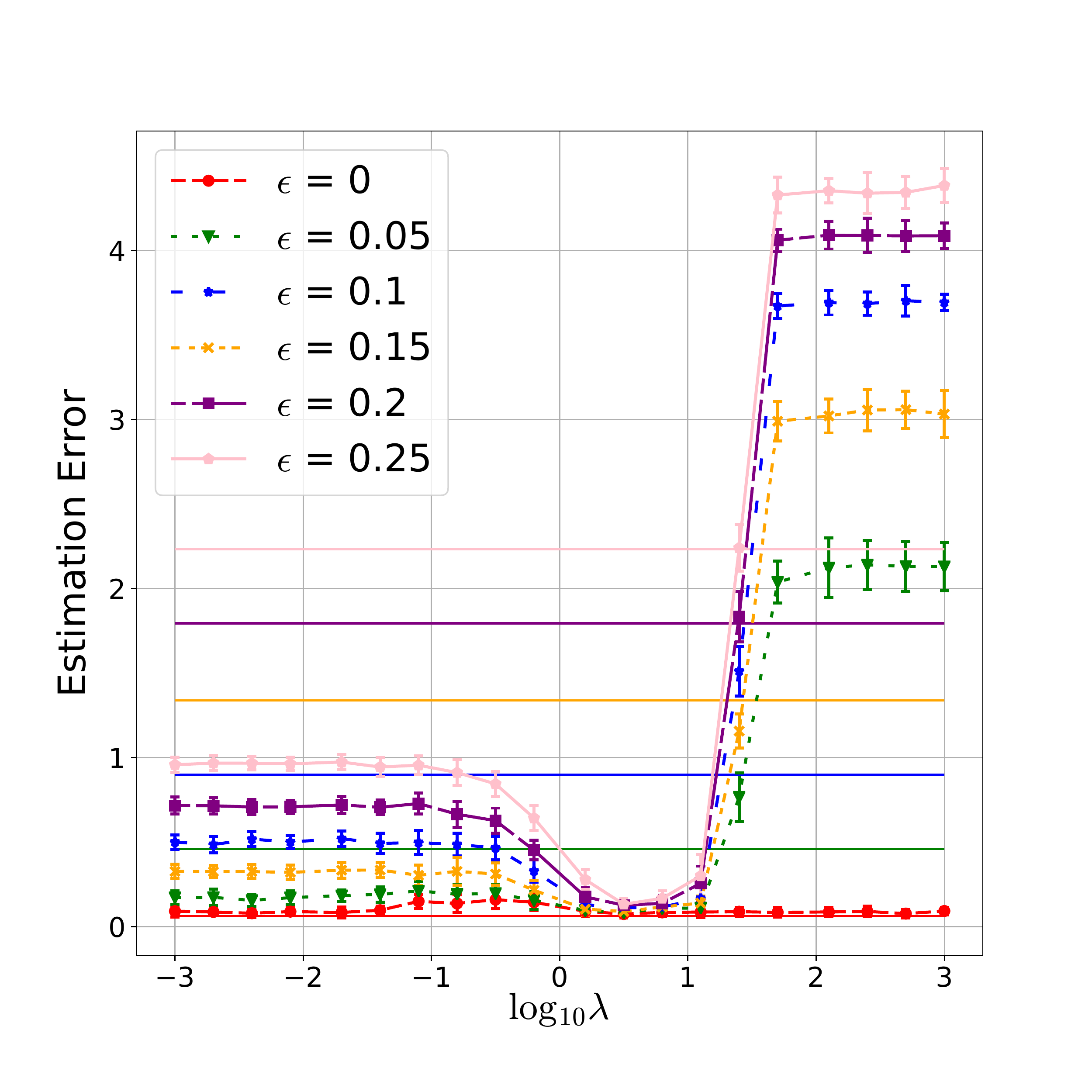}}\hspace{1cm}
  \subfigure[Varying outlier distribution mean]{\includegraphics[scale=0.31]{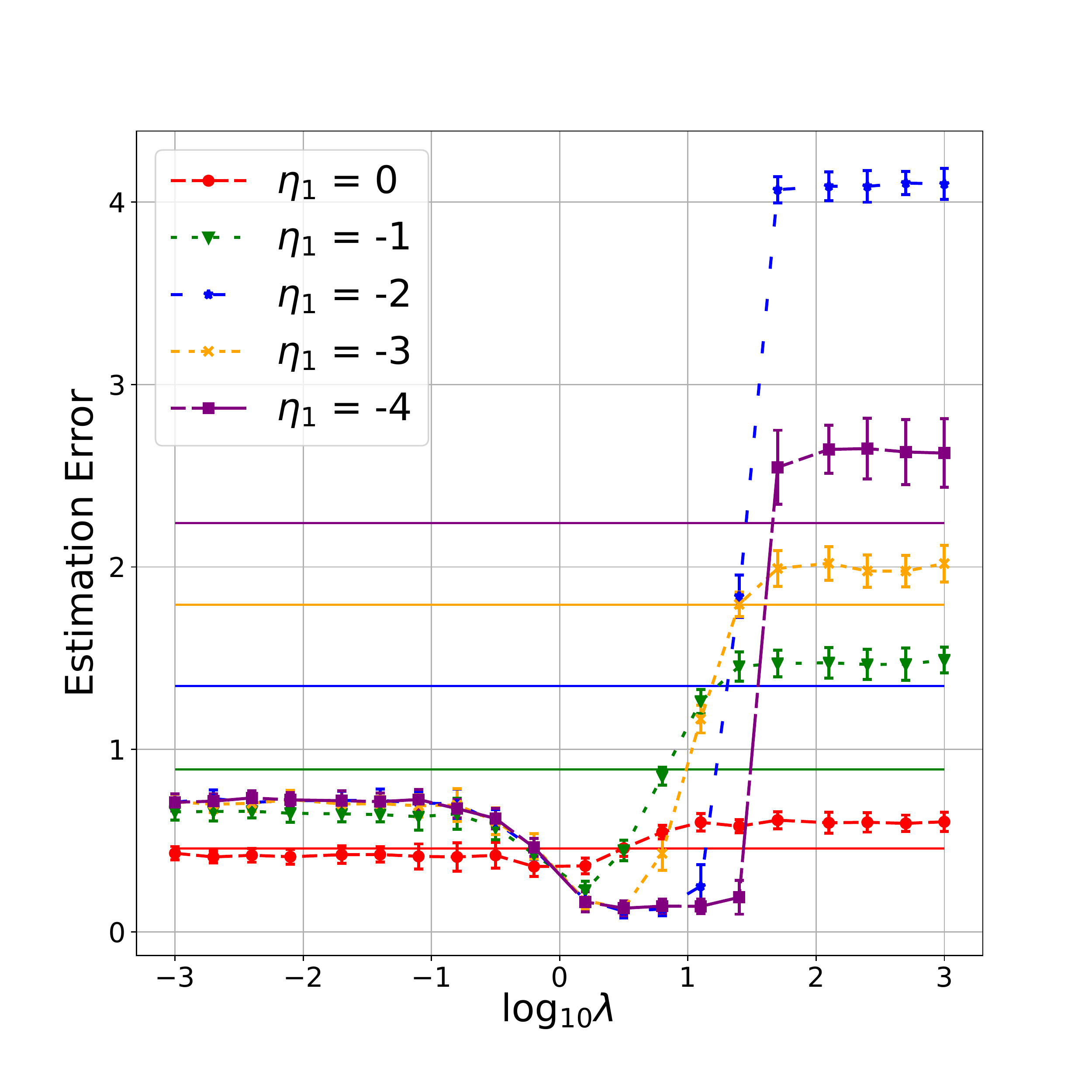}}
\caption{Empirical study of the cost truncation hyperparameter $\lambda$ sensitivity.}
\label{fig:sim-performance}
\end{figure*}

\subsection{Robust mean estimation}
\label{sec:robust_mean_est}
Following \citet{wu2020minimax}, we consider a simple generator of the form $g_\theta(x) = x + \theta$, $x \sim \mathcal{N}(0,I_d)$, $d$ is the data dimension. The basic idea of robust mean estimation with GANs is to minimize various distributional divergences between samples from $g_\theta$ and observed data simulated from $(1-\eps)\mathcal{N}(\eta_0,I_d) + \eps\mathcal{N}(\eta_1,I_d)$. The goal is to estimate $\eta_0$ with $\theta$. 

To efficiently implement ROBOT GAN, we use a standard min-max optimization approach: solve the inner max (ROBOT) and use gradient descent for the outer min parameter. To solve ROBOT, it is straightforward to adopt any of the prior stochastic regularized OT solvers: the only modification is the truncation of the cost entries as in \eqref{eq:robot2-d}. We use the stochastic algorithm for semi-discrete regularized OT \citep[Algorithm 2]{genevay2016stochastic}. We summarize ROBOT GAN in Algorithm \ref{algo:robogan}. Line 5 - Line 11 perform the inner optimization where we solve the entropy regularized OT dual with truncated cost and Line 12 - Line 14 perform gradient update of $\theta$.




For the $f$-divergence GANs \citep{nowozin2016f}, we use the code of \citet{wu2020minimax} for GANs with Jensen-Shannon (JS) loss, squared Hellinger (SH) loss, and Reverse Kullback-Leibler (RKL) loss. For the exact expressions of these divergences, see Table 1 of \citet{wu2020minimax}. We report estimation error measured by the Euclidean distance between true uncontaminated mean $\eta_0$ and estimated mean $\theta$ for various contamination distributions in Table \ref{table:robogan}. ROBOT GAN performs well across all considered contamination distributions. As the difference between true mean $\eta_0$ and contamination mean $\eta_1$ increases, the estimation error of all methods tends to increase. However, when it becomes easier to distinguish outliers from clean samples, i.e., $\eta_1 = 2 \cdot \mathbf{1_5}$, performance of ROBOT noticeably improves. We present an analogous study with data simulated from a mixture of Cauchy distributions in Appendix \ref{sec:robot_cauchy}\footnote{All codes are available at \url{https://github.com/debarghya-mukherjee/Robust-Optimal-Transport}.}.

\begin{algorithm}
\caption{ROBOT GAN}
\label{algo:robogan}
\begin{algorithmic}[1]
    \STATE \textbf{Input: } robustness regularizion $\lambda$, entropic regularization $\alpha$, data distribution $\mu_n \in \Delta^{n-1}$, $supp(\mu_n)= \cX = [X_1,\dots,X_n]$, steps sizes $\tau$ and $\gamma$
	\STATE \textbf{Initialize: } Initialize $\theta = \theta_{init}$, set number of iterations $M$ and $L$, $i = 0$, $\bv = \tilde \bv = \mathbf{0}$. 
	\FOR{$j=1,\dots,M$}
	    \STATE Generate $\tilde z \sim \cN(0, I_d)$ and set $z = \tilde z + \theta$. 
	    \STATE Set the cost vector $\bc \in \bbR^n$ as $\bc(k) = \min\{c(X_k, z), 2\lambda\}$ for $k = 1,\dots,n$.
	    \FOR{$i=1,\dots,L$} 
	    \STATE Set $\bh \leftarrow \frac{\tilde \bv - \bc}{\alpha}$ and do the normalized exponential transformation $\bu \leftarrow \frac{e^{\bh}}{\langle \mathbf{1}, e^{\bh}\rangle}$. 
	    \STATE Calculate the gradient  $\nabla \tilde \bv \leftarrow \mu_n - \bu$. 
	    \STATE Update $\tilde \bv \leftarrow \tilde \bv + \gamma \nabla \tilde \bv$ and $\bv \leftarrow (1/(j+i))\tilde \bv + (j+i-1/(j+i))\bv$.
	    \ENDFOR
	    \STATE Do the same transformation of $\bv$ as in Step 7, i.e.\ set $\bh \leftarrow \frac{\bv - \bc}{\alpha}$ and set $\Pi \leftarrow \frac{e^{\bh}}{\langle \mathbf{1}, e^{\bh}\rangle}$.
	    \STATE Set $\Pi(k) = 0$ for $k$ such that $c(X_k, z) > 2\lambda$ for $k=1,\dots,n.$
	    \STATE Calculate gradient with respect to $\theta$ as $\nabla \theta = 2\left[z\sum_{k}\Pi(k) - \cX^{\top} \Pi\right]$
    	\STATE Update $\theta \leftarrow \theta - \tau \nabla \theta$.
    \ENDFOR
    \STATE \textbf{Ouput: }$\theta$
\end{algorithmic}		
\end{algorithm}

We also compared to the Sinkhorn-based UOT algorithm \citep{chizat2018scaling} available in the Python Optimal Transport (POT) library \citep{flamary2017pot}; to obtain a UOT GAN, we modified steps 5-12 of Algorithm \ref{algo:robogan} for computing $\Pi$.
Unsurprisingly, both ROBOT and UOT perform similarly: recall equivalence to Formulation 3, which is similar to UOT with TV norm. The key insight of our work is the equivalence to classical OT with truncated cost, that greatly simplifies optimization and allows to use existing stochastic OT algorithms. In this experiment, the sample size $n=1000$ is sufficiently small for the Sinkhorn-based UOT POT implementation to be effective, but it breaks in the experiment we present in Section \ref{sec:exp-outliers}. We also tried the code of \citet{balaji2020robust} based on CVXPY \citep{diamond2016cvxpy}, but it is too slow even for the $n=1000$ sample size. In Subsection \ref{sec:balaji_comp} we present a comparison to \citet{balaji2020robust} on a smaller sample size.

\paragraph{Hyperparameter sensitivity study.} In the previous experiment, we set $\lambda = 0.5$. Now we demonstrate empirically that there is a broad range of $\lambda$ values performing well. In Figure \ref{fig:sim-performance}(a), we study sensitivity of $\lambda$ under various contamination proportions $\eps$ holding $\eta_0=\mathbf{1}_5$ and $\eta_1=5 \cdot \mathbf{1}_5$ fixed. Horizontal lines correspond to $\lambda=\infty$, i.e., vanilla OT. The key observations are: there is a wide range of $\lambda$ efficient at all contamination proportions (note the $\log_{10}$ $x$-axis scale), and ROBOT is always at least as good as vanilla OT (even when there is no contamination $\eps=0$). In Figure \ref{fig:sim-performance}(b), we present a similar study varying the mean of the contamination distribution and holding $\eps=0.2$ fixed. We see that as the contamination distribution gets closer to the true distribution, it becomes harder to pick a good $\lambda$, but the performance is always at least as good as the vanilla OT (horizontal lines).


\subsection{Outlier detection for data collection}
\label{sec:exp-outliers}
Our robust OT formulation \eqref{eq:robot1-d} enables outlier identification. Let $\nu_m$ be a clean dataset and $\mu_n$ potentially contaminated with outliers. Recall that ROBOT allows modification of one of the input distributions to eliminate potential outliers. We can identify outliers in $\mu_n$ as follows: if $\mu_n(i) + s^*_1(i) = 0$, then $X_i$, the $i$th point in $\mu_n$, is an outlier. Instead of directly solving \eqref{eq:robot1-d}, which may be inefficient, we use our equivalence results and solve an easier optimization problem \eqref{eq:robot2-d}, followed by recovering $\bs$ to find outliers via Algorithm \ref{algo:outlier}. When using entropy-regularized approximate solutions to detect outliers with Algorithm \ref{algo:outlier}, in step 4, $\mu_n + s^*_1$ will not be exactly 0 for the outliers, so a small threshold should be used instead. We modify step 4 to ``Find $\mathcal{I}$, the set of all the indices where $\mu_n + s_1^* < 1/n^2$'' when using entropy regularization.

To test our outlier-identification methodology we follow the experimental setup of \citet{tagasovska2019single}. Specifically, let $\nu_m$ be a clean dataset consisting of 10k MNIST digits from 0 to 4 and $\mu_n$ be a dataset collected ``in the wild'' consisting of (different) 8k MNIST digits from 0 to 4 and 2k outlier MNIST images of digits from 5 to 9. We compute ROBOT$(\mu_n,\nu_m)$ to identify outlier digit images in $\mu_n$. For each point in $\mu_n$ we obtain a prediction, outlier or clean, which allows us to evaluate accuracy. \citet{tagasovska2019single} use last-layer features of a neural network pre-trained on the clean data $\nu_m$---it is straightforward to combine ROBOT and other baselines we consider with any feature extractor, but in this experiment we simply use the raw data.

We compare to the Orthonormal Certificates (OC) method of \citet{tagasovska2019single} and to a variety of standard outlier detection algorithms available in Scikit-learn \citep{scikit-learn}: one class SVM \cite{scholkopf1999support}, local outlier factor \cite{breunig2000lof}, isolation forest \cite{liu2008isolation} and elliptical envelope \cite{rousseeuw1999fast}. All baselines except one class SVM and local outlier factor use clean data for training as does our method. 

\begin{table}
\centering
\caption{Outlier detection on MNIST.}
\vspace{.05in}
\begin{tabular}{lccc}
\toprule
Methods & Accuracy \\
\hline 
One class SVM & $0.496 \pm 0.003$ \\
Local outlier factor & $0.791 \pm 0.001$ \\
Isolation forest & $0.636 \pm 0.010$ \\
Elliptical envelope & $0.739 \pm 0.002$ \\
Orthonormal Certificates & $0.819 \pm 0.008$ \\
ROBOT & $0.859 \pm 0.002$ \\
ROBOT-Sinkhorn &  $\textbf{0.897} \pm 0.004$
\end{tabular}
\label{table:outliers}
\end{table}

\begin{figure}[!ht]
    \centering
    \includegraphics[scale=0.15]{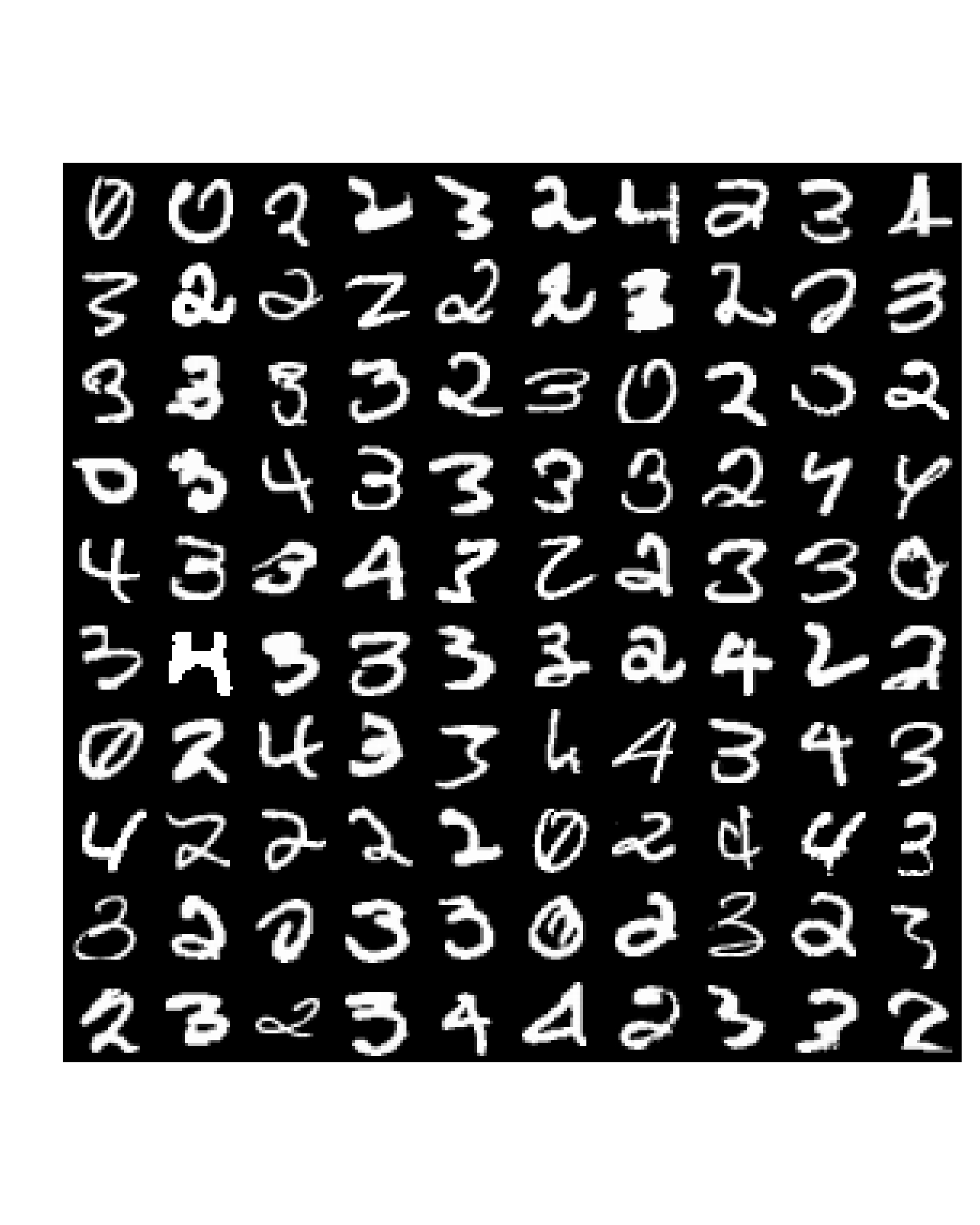}
    \qquad
    \centering 
     \includegraphics[scale=0.15]{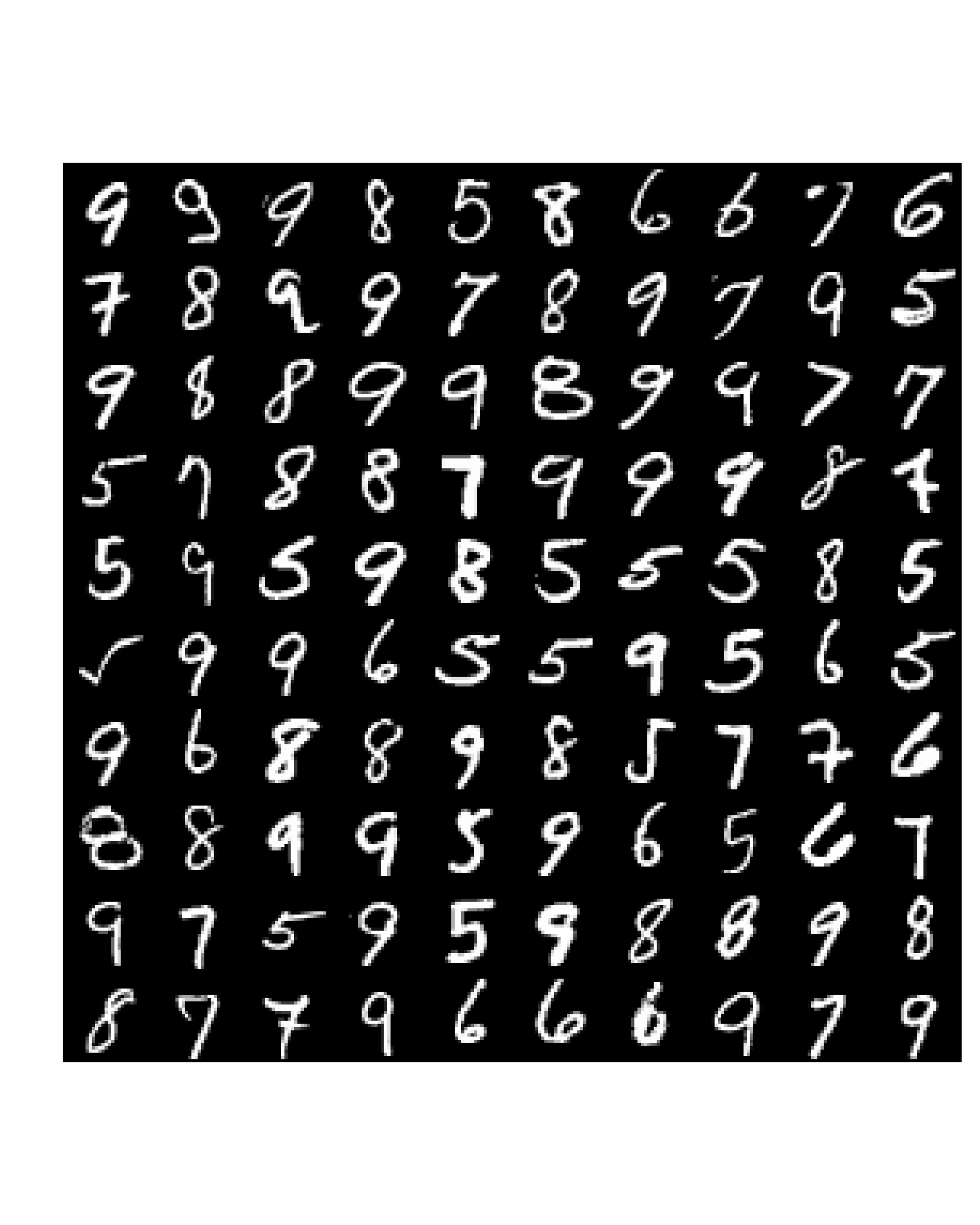}
    \qquad
     \begin{tabular}{|c|c|c|c|}
     \multicolumn{2}{c}{$\overbrace{\rule{9.5em}{0pt}}^{\text{Inliers detected as outliers}}$}& 
  \multicolumn{2}{c}{$\overbrace{\rule{9.5em}{0pt}}^{\text{Outliers detected as inliers}}$}\\
    \hline \hline
    digit & count & digit & count \\
    \hline \hline 
        0 & 10 & 5 & 191\\
        1 & 1 & 6 & 94\\
        2 & 53 & 7 & 175 \\
        3 & 32 & 8 & 181 \\
        4 & 22 & 9 & 309 \\
        \hline \hline
    \end{tabular}
    \captionlistentry[table]{A table beside a figure}
    \caption{Insights into ROBOT-Sinkhorn performance: the top left figure is a collection of random inlier digits miss-classified as outliers; the top right picture represents random outlier digits miss-classified as inliers and the table illustrates frequency distribution of miss-classified images. The majority of the errors are on digit 9, possibly due to its similarity to 4.}
    \label{fig:mnist}
\end{figure}

Results of 30 experiment repetitions are summarized in Table \ref{table:outliers}. ROBOT, i.e.\ Algorithm \ref{algo:outlier} where \eqref{eq:robot2-d} is solved exactly with a linear program solver, and ROBOT-Sinkhorn, i.e.\ Algorithm \ref{algo:outlier} where \eqref{eq:robot2-d} is solved approximately using Sinkhorn \citep{cuturi2013Sinkhorn}, produce the best results. In this experiment ROBOT-Sinkhorn outperforms ROBOT, but it is not necessarily to be expected in general. We conclude that our method is effective in assisting data collection once an initial set of clean data has been acquired and is compatible with entropy-regularized OT solvers ensuring scalability.

Elliptical envelope assumes that clean data is Gaussian, while one class SVM, isolation forest and local outlier factor correspondingly attempt to fit SVM, random forest and $k$-nearest neighbors classifiers to distinguish clean and outlier samples. All these baselines work best when the clean data is unimodal, which is not the case for the MNIST 0 to 4 digits considered in our experiment. The orthonormal certificates method is rooted in PCA and assumes that clean and outlier data live in different subspaces. This assumption is reasonable for the MNIST data, but it might not hold broadly in practice. We believe that empirical success of our method is due to its optimal transport nature. OT is a geometry-sensitive metric on distributions that can distinguish multi-modal distributions and distributions supported on the same subspace. We provide additional insights into the ROBOT performance in Figure \ref{fig:mnist}. A theoretical investigation of ROBOT outlier-detection guarantees is an interesting future work direction.


\paragraph{Hyperparameter selection.}
To select the cost truncation hyperparameter $\lambda$, we propose the following heuristic: since we know that $\nu_m$ is clean, we can subsample two datasets from it, compute vanilla OT to obtain transportation plan $\Pi$ and set $\lambda$ to be half the maximum distance between matched elements, i.e.\ $2\lambda = \max_{i,j}\{C_{ij}:\,\Pi_{ij}>0\}$, where $C$ is the cost matrix for the two subsampled datasets. This procedure is essentially estimating maximum distance between matched clean samples. To avoid subsampling noise we use 99th percentile instead of the maximum. Our experiments also revealed that increasing $\lambda$ increases the set of outliers progressively, i.e. if a sample is detected as outlier for some value of $\lambda$, then it will be detected as outlier for all higher values of $\lambda$. A rigorous theoretical analysis for this observation is a potential future work. 

The Orthonormal Certificates method \citep{tagasovska2019single} also requires setting a threshold. It computes the null-space of the clean train data and uses the norm of the projection into that space as a score to distinguish outliers. Similarly to ROBOT, and as the authors do in their code, we use 99th percentile of those scores computed on the clean train data as the threshold. For other baselines, we use the default hyperparameters.

\subsection{Comparison with \citet{balaji2020robust}}
\label{sec:balaji_comp}
We conduct additional experiments comparing to the recent robust optimal transport method of \citet{balaji2020robust}. Their method relies on CVXPY and does not scale to the sample sizes considered in our previous experiments. We compare on smaller data sizes.

\textbf{Robust mean estimation.} We set $n = 200$ samples with contamination distribution mean equal to 2 and true mean equal to $0$ (same configuration as in the last row in Table \ref{table:robogan}). The runtime and the estimation error is reported in Table \ref{table:est_balaji}.

\begin{table}
\centering
\caption{Robust mean estimation for $n =200$.}
\vspace{-.05in}
\begin{tabular}{lccc}
\toprule
Method & Estimation error & Run-time\\
\hline 
ROBOT-Sinkhorn & $0.196 \pm 0.1$ &  $35s \pm 0.6s$\\
\citet{balaji2020robust} & $0.212 \pm 0.074$ & $7745s \pm 1670s$
\end{tabular}
\label{table:est_balaji}
\end{table}

\textbf{Outlier detection experiment.} In this experiment, we consider the size of the entire dataset (inliers + outliers) to be $n=1000$ (with $800$ inliers and $200$ outliers). The results (accruacy and run-time) are provided in Table \ref{table:mnist_balaji}. 

\begin{table}
\centering
\caption{Outlier detection for $n=1000$.}
\vspace{-.05in}
\begin{tabular}{lccc}
\toprule
Method & Estimation error & Run-time \\
\hline 
ROBOT-Sinkhorn & $0.86 \pm 0.015$ & $6 \pm 2s$ \\
\citet{balaji2020robust} & $0.8 \pm 0.0005$  & $3343 \pm 960s$
\end{tabular}
\label{table:mnist_balaji}
\end{table}

The method of \citet{balaji2020robust} is significantly slower as expected. Comparing the performance, we think that ROBOT performs better because it is based on TV norm, while the method of \citet{balaji2020robust} uses a chi-squared constraints on the marginal perturbations. A TV constraint on the marginal perturbations is more closely related to the $\eps$-contamination model for outlier detection, suggesting that a TV-based constraint/regularizer could be a better choice for the outlier detection applications. We also note that in the outlier detection experiment, using chi-square divergence results in a non-sparse solution and requires tuning a threshold parameter to perform outlier detection, in addition to the chi-square distance hyperparameter. We tuned those parameters, but were not able to achieve significant performance improvements for the method of \citet{balaji2020robust}.

\section{Summary and discussion}

We propose and study ROBOT, a robust formulation of optimal transport. Although the problem is seemingly asymmetric and challenging to optimize, there is an equivalent formulation based on cost truncation that is symmetric and compatible with modern stochastic optimization methods for OT. 

ROBOT closely resembles unbalanced optimal transport (UOT). In our formulation, we added a TV regularizer to the vanilla optimal transport problem. This is motivated by the $\eps$-contamination model. In UOT, the TV regularizer is typically replaced with a KL divergence. Other choices of the regularizer may lead to new properties and applications. Studying equivalent, simpler formulations of UOT with different divergences may be a fruitful future work direction.

From the practical perspective, in our experiments we observed no degradation of ROBOT GAN in comparison to OT GAN, even when there were no outliers. It is possible that replacing OT with ROBOT may be beneficial for various machine learning applications of OT. Data encountered in practice may not be explicitly contaminated with outliers, but it often has errors and other deficiencies, suggesting that a ``no-harm'' robustness is desirable.

\section*{Acknowledgements}
This paper is based upon work supported by the National Science Foundation (NSF) under grants no. 1830247 and 1916271. J. Solomon acknowledges the generous support of Army Research Office grants W911NF1710068 and W911NF2010168, of Air Force Office of Scientific Research award FA9550-19-1-031, of National Science Foundation grant IIS-1838071, from the CSAIL Systems that Learn program, from the MIT–IBM Watson AI Laboratory, from the Toyota–CSAIL Joint Research Center, from a gift from Adobe Systems, and from the Skoltech–MIT Next Generation Program. We also thank anonymous reviewers, whose comments were extremely helpful for further improvement of our paper.

\appendix



\section{Proof of Theorem \ref{thm:main_thm}}
\label{sec:proofs}
In the proofs, $a \wedge b$ denotes $\min\{a, b\}$ for any $a, b \in \reals$. 
\subsection{Proof of discrete version }
\label{sec:proof_discrete}
\begin{proof}
Define a matrix $\Pi$ as: 
$$
\Pi(i,j) = 
\begin{cases}
0, & \text{if } C(i, j) > 2 \lambda \\
\Pi^*_2(i, j), &  \text{otherwise}
\end{cases}
$$
Also define $s \in \mathbb{R}^n$ and $t \in \mathbb{R}^m$ as: 
$$
s^*_1(i) = -\sum_{j=1}^m \Pi^*_2(i, j) \mathds{1}_{C(i, j) > 2 \lambda}
$$
and similarly define: 
$$
t^*_1(j) = \sum_{i=1}^n \Pi^*_2(i, j) \mathds{1}_{C(i, j) > 2 \lambda}
$$
These vectors corresponds to the row sums and the column sums of the elements of the optimal transport plan of Formulation 2, where the cost function exceeds $2 \lambda$. Note that, these co-ordinates of the optimal transport plan corresponding to those co-ordinates of cost matrix, where the cost is greater than $2\lambda$ and contribute to the objective value via their sum only, hence any different arrangement of these transition probabilities with same sum gives the same objective value. 

Now based on this $\Pi$ obtained we construct a feasible solution of Formulation 1 following Algorithm \ref{algo:f1-f2}: 
$$
\Pi^*_1 = 
\begin{bmatrix}
\mathbf{0} & \Pi \\
\mathbf{0} & \diag(t^*_1)
\end{bmatrix}
$$
The row sums of $\Pi^*_1$ is:
\[
\Pi^*_1 \mathbf{1}= 
\begin{bmatrix}
\mu_n + s^*_1 \\ t^*_1
\end{bmatrix}
\]
and it is immediate from the construction that the column sums of $\Pi^*_1$ is $\nu_m$. Also as: 
$$
\sum_{i=1}^n s^*_1(i) = \sum_{j=1}^m t^*_1(j) = \sum_{(i, j): C_{i, j} > 2 \lambda} \Pi^*_2(i, j) 
$$
and $s^*_1 \preceq 0, t^*_1 \succeq 0$, we have: 
$$
\mathbf{1}^{\top}(\mu_n + s^*_1 + t^*_1) = \mathbf{1}^{\top}p = 1 \,.
$$
Therefore, we have $(\Pi^*_1, s^*_1, t^*_1)$ is a feasible solution of Formulation 1. Now suppose this is not an optimal solution. Pick an optimal solution $\tilde \Pi, \tilde s, \tilde t$ of Formulation 1 so that: 
$$
\langle C_{aug}, \tilde \Pi \rangle + \lambda \left[\|\tilde s\|_1 + \|\tilde t\|_1\right]  < \langle C_{aug},  \Pi^*_1 \rangle + \lambda \left[\|s^*_1\|_1 + \|t^*_1\|_1\right] 
$$
The following two lemmas provide some structural properties of any optimal solution of Formulation 1:

\begin{lemma}
\label{lem:R1-structure}
Suppose $\Pi^*_1, s^*_1, t^*_1$ are optimal solution for Formulation 1. Divide $\Pi^*_1$ into four parts corresponding to augmentation as in algorithm \ref{algo:f1-f2}: 
$$
\Pi^*_1 = \begin{bmatrix}
\Pi^*_{1, 11} & \Pi^*_{1, 12} \\
\Pi^*_{1,21} & \Pi^*_{1, 22}
\end{bmatrix}
$$ 
Then we have $\Pi^*_{1, 11} = \Pi^*_{1, 21} = \mathbf{0}$ and $\Pi^*_{1, 22}$ is a diagonal matrix. 
\end{lemma}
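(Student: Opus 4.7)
My plan is to prove each of the three claims ($\Pi^*_{1,11}=\mathbf{0}$, $\Pi^*_{1,21}=\mathbf{0}$, and $\Pi^*_{1,22}$ diagonal) by a single contradiction-plus-perturbation argument: assume $\Pi^*_1$ violates one of these structural conditions, and construct a feasible modification of $(\Pi^*_1, s^*_1, t^*_1)$ whose Formulation 1 objective $\langle C_{aug},\Pi\rangle+\lambda(\|s\|_1+\|t\|_1)$ is strictly smaller. The first step is to read off from Algorithm \ref{algo:f1-f2} the relevant entries of $C_{aug}$: the (1,2) block contains the original $C$, the diagonal of the (2,2) block carries a finite cost tied to $2\lambda$, and the (1,1), (2,1) and off-diagonal (2,2) entries are (weakly) dominated by rerouting through the remaining blocks after one accounts for the $\lambda$-penalties on $s,t$. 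This structural input drives the rest of the proof.

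For the (2,2)-diagonality claim, I would fix indices $j\neq k$ with $\Pi^*_{1,22}(j,k)>0$ and redistribute mass $\delta>0$ along the cycle that subtracts $\delta$ from entry $(j,k)$ and puts $\delta$ onto the diagonal entries $(j,j)$ and $(k,k)$, compensating on the (1,2) block to keep both the augmented row sums $\mu_n+s^*_1,t^*_1$ and column sums $\nu_m$ intact; since the off-diagonal (2,2) entry is strictly costlier than the diagonal counterpart plus the routed compensation, the objective strictly decreases. For $\Pi^*_{1,11}$ and $\Pi^*_{1,21}$, I would analogously pick a positive entry, route the mass out through the (1,2) block (adjusting $s^*_1(i)$ and/or $t^*_1(j)$ by the same amount), and use the cost inequality to conclude a strict decrease; the bookkeeping keeps $s^*_1\preceq0$ and $t^*_1\succeq0$ valid because we are only shrinking their magnitudes.

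The main obstacle I anticipate is the feasibility bookkeeping: any local perturbation of $\Pi^*_1$ has to preserve both row and column marginals while simultaneously keeping $s^*_1,t^*_1$ in their admissible sign cones, so the perturbation cannot be a single entry change but must be a cycle in the augmented bipartite graph. Once the correct cycle is exhibited, the change in objective reduces to a linear combination of $C_{aug}$ entries plus $\lambda$ times the change in $\|s\|_1+\|t\|_1$, and verifying that this is strictly negative reduces to the dominance inequalities above. I would present the three cases in the order (i) off-diagonal (2,2), (ii) $\Pi^*_{1,11}$, (iii) $\Pi^*_{1,21}$, since each subsequent case can reuse the reduction from the previous one (after one case is handled, fewer perturbation directions remain to consider).
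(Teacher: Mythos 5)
There is a genuine gap, and it concerns the mechanism behind both halves of the lemma. For $\Pi^*_{1,11}=\Pi^*_{1,21}=\mathbf{0}$, no perturbation or cost argument is needed or appropriate: in the augmented Formulation 1 the column-sum constraint reads $(\Pi^*_1)^\top\mathbf{1}=[\mathbf{0}^\top \ \ q^\top]^\top$, i.e.\ the first block of columns must sum to zero, and together with $\Pi^*_1\succeq 0$ this forces every entry of those columns to vanish for \emph{any feasible} plan, optimal or not. Your plan instead assumes a positive entry exists in those blocks and tries to "route the mass out" using a dominance property of $C_{aug}$ in the $(1,1)$ and $(2,1)$ blocks; but those dominance properties are something you are guessing rather than deriving, and the hypothetical positive entry you would perturb cannot exist in a feasible plan in the first place. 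You have missed that this half of the lemma is a feasibility statement, not an optimality statement.

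For the diagonality of $\Pi^*_{1,22}$ your approach also goes astray on the bookkeeping. The specific move you describe (subtract $\delta$ from $(j,k)$ and add $\delta$ to both $(j,j)$ and $(k,k)$) does not conserve mass, and the larger issue is that you are trying to preserve the row sums of the $(2,2)$ block. Those row sums are exactly the slack variable $t$, which is free (it only enters through $\lambda\|t\|_1$), so there is no need for a cycle and no need to touch the $(1,2)$ block at all. The paper's argument simply collapses each column of $\Pi^*_{1,22}$ onto its diagonal entry, which preserves the column sums (hence feasibility), preserves $\|t\|_1=\mathbf{1}^\top\Pi^*_{1,22}\mathbf{1}$ after redefining $t(i)$ as the new diagonal entry, and strictly decreases $\langle C_{aug},\cdot\rangle$ because the diagonal entries of the cost matrix are zero --- this zero-diagonal assumption is the one cost fact the proof actually uses, and it does not appear in your list of structural inputs. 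Your compensation through the $(1,2)$ block would alter $\langle C,\Pi_{12}\rangle$ and require an additional, unestablished inequality to control; the correct observation is that no compensation is required because $t$ is not a fixed marginal.
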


\begin{lemma}
\label{lem:f2_characterization}
If $\Pi^*_1, s^*_1, t^*_1$ is an optimal solution of Formulation 1 then: 
\begin{enumerate}
    \item If $C_{i,j} > 2 \lambda$ then $\Pi^*_1(i, j) = 0$. 
    \item If $C_{i, j} < 2 \lambda$ for some $i$ and for all $1 \le j \le n$, then  $s^*_1(i) = 0$. 
    \item If $C_{i, j} < 2 \lambda$ for some $j$ and for all $1 \le i \le m$, then $t^*_1(j) = 0$. 
    \item If $C_{i, j} < 2 \lambda$ then $s^*_1(i) t^*_1(j) = 0$.
\end{enumerate}
\end{lemma}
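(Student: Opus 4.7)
The plan is to prove each of the four claims by a contradiction-and-perturbation argument: assume that the property fails at an optimal $(\Pi^*_1, s^*_1, t^*_1)$, construct a nearby feasible triple that strictly decreases the Formulation 1 objective, and derive a contradiction. The common mechanism underlying every case is a direct comparison between routing a unit of mass through $(i,j)$ at cost $C_{ij}$ versus diverting that same unit through the slack channel at cost $\lambda$ on $s$ plus $\lambda$ on $t$, hence $2\lambda$ total. Throughout, I will use Lemma \ref{lem:R1-structure} to keep the augmented block structure intact, and the mass-balance identity $\mathbf{1}^{\top}(\mu_n + s^*_1 + t^*_1) = 1$ (equivalently $\mathbf{1}^{\top} s^*_1 + \mathbf{1}^{\top} t^*_1 = 0$) to couple the allowed slack changes.

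For part 1, suppose $\Pi^*_1(i,j) = \epsilon > 0$ with $C_{ij} > 2\lambda$. Decrease this entry by $\epsilon$, set $\tilde s_1(i) = s^*_1(i) - \epsilon$ and $\tilde t_1(j) = t^*_1(j) + \epsilon$, and update the diagonal entry $\Pi^*_{1,22}(j,j)$ by $+\epsilon$ so that the row/column sums prescribed by Lemma \ref{lem:R1-structure} are preserved. The sign constraints $s \preceq 0$ and $t \succeq 0$ are respected. The objective changes by $-\epsilon C_{ij} + 2\lambda \epsilon < 0$, a contradiction.

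For parts 2 and 3, suppose the hypothesis of part 2 holds but $s^*_1(i) < 0$. The value $|s^*_1(i)|$ is paid for at total rate $2\lambda$ via the slack (it consumes one unit each in $\|s\|_1$ and, by mass balance, must be matched by a corresponding unit somewhere in $\|t\|_1$). Pick any $j$; since $C_{ij} < 2\lambda$, reroute a small amount $\delta > 0$ by setting $\tilde s_1(i) = s^*_1(i) + \delta$, picking some $j'$ with $t^*_1(j') > 0$ and setting $\tilde t_1(j') = t^*_1(j') - \delta$, and adding $\delta$ to $\Pi^*_1(i,j)$ while subtracting $\delta$ from the diagonal entry corresponding to $j'$ so that both marginals remain correct. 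The objective changes by $\delta C_{ij} - 2\lambda \delta < 0$. Part 3 is identical by symmetry.

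Part 4 is the cleanest: if $s^*_1(i) < 0$, $t^*_1(j) > 0$, and $C_{ij} < 2\lambda$, then increase $\Pi^*_1(i,j)$ by $\delta > 0$ and simultaneously set $\tilde s_1(i) = s^*_1(i) + \delta$ and $\tilde t_1(j) = t^*_1(j) - \delta$, decreasing the diagonal entry at coordinate $j$ of $\Pi^*_{1,22}$ by $\delta$. Feasibility, the sign constraints, and the block structure of Lemma \ref{lem:R1-structure} are all preserved, and the objective changes by $\delta(C_{ij} - 2\lambda) < 0$, contradicting optimality.

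The main obstacle is bookkeeping: every perturbation must simultaneously preserve the block structure of $\Pi^*_1$ mandated by Lemma \ref{lem:R1-structure}, the row- and column-sum constraints tying $\Pi^*_1$ to $\mu_n + s^*_1$ and $\nu_m$, the sign constraints $s^*_1 \preceq 0$, $t^*_1 \succeq 0$, and the scalar conservation law. Parts 2 and 3 are the most delicate because one must locate a second coordinate ($j'$ or $i'$) through which to balance the adjustment, which is precisely where the hypothesis "for all $j$" and the mass-balance identity are needed to guarantee such a coordinate exists.
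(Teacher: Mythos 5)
Your overall strategy is exactly the paper's: each part is a local perturbation that trades mass between a cell $(i,j)$ at cost $C_{ij}$ and the slack channel at cost $2\lambda$, yielding a strict improvement and hence a contradiction. Parts 1 and 4 are correct, and your bookkeeping (compensating on the diagonal of $\Pi^*_{1,22}$ so the column sums stay at $\nu_m$) is actually more careful than the paper's one-line versions.

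There is, however, a step in your part 2 (and hence part 3) that fails as written. You add $\delta$ to $\Pi^*_1(i,j)$ for an arbitrary $j$, but subtract $\delta$ from the diagonal entry of $\Pi^*_{1,22}$ at a possibly different index $j'$ chosen so that $t^*_1(j')>0$. If $j\neq j'$, the sum of column $j$ increases by $\delta$ and the sum of column $j'$ decreases by $\delta$, so the constraint that the column marginals equal $\nu_m$ is violated; your claim that ``both marginals remain correct'' is false in that case. The repair is immediate and is precisely where the hypothesis ``$C_{ij}<2\lambda$ for all $j$'' earns its keep: since the bound holds for \emph{every} column, you may simply take $j:=j'$, i.e.\ route the rerouted mass into the same column whose diagonal entry (equal to $t^*_1(j')$ by Lemma \ref{lem:R1-structure}) you are depleting; existence of such a $j'$ follows from $\|t^*_1\|_1=\|s^*_1\|_1\ge |s^*_1(i)|>0$, as you note. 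The paper's version of part 2 does the analogous thing globally, splitting $|s^*_1(i)|=\sum_j a_j$ with $a_j\le t^*_1(j)$ and moving $a_j$ into cell $(i,j)$ for each $j$ simultaneously; your single-column small-$\delta$ version suffices for the contradiction once the index mismatch is fixed. With that correction your proof coincides with the paper's.
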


We provide the proofs in the next subsection. By Lemma \ref{lem:R1-structure} we can assume without loss of generality: 
$$
\tilde \Pi = 
\begin{bmatrix}
\mathbf{0} & \tilde \Pi_{12} \\
\mathbf{0} & \diag(\tilde t)
\end{bmatrix}
$$
Now based on $\left(\tilde \Pi, \tilde s, \tilde t\right)$ we create a feasible solution namely $\Pi^*_{2, new}$ of Formulation 2 as follows: Define the set of indices $\{i_1, \cdots, i_k\}$ and $\{j_1, \dots, j_l\}$ as: 
$$ 
\tilde s_{i_1}, \tilde s_{i_2}, \dots, \tilde s_{i_k} > 0 \ \ \ \ \text{and} \ \ \  \tilde t_{j_1}, \tilde t_{j_2}, \dots, \tilde t_{j_l} > 0 \,. 
$$ 
Then by part (4) of Lemma \ref{lem:f2_characterization} we have $C_{i_\alpha, j_\beta} > 2 \lambda$ for $\alpha \in \{1, \dots, k\}$ and $\beta \in \{1, \dots, l\}$. Also by part (2) of Lemma \ref{lem:f2_characterization} the value of transport plan at these co-ordinates is 0. Now distribute the mass of slack variables in these co-ordinates such that the marginals of new transport plan becomes exactly $\mu_n$ and $\nu_m$. This new transport plan is our $\Pi^*_{2, new}$. Recall that, $\|\tilde s\|_1 = \| \tilde t\|_1$. Hence, here the regularizer value decreases by $2 \lambda \|\tilde s\|_1$ and the cost value increased by exactly $2 \lambda \|\tilde s\|_1$ as we are truncating the cost. Hence we have: 
\begin{align*}
    \langle C_{\lambda}, \Pi^*_{2,new} \rangle & = \langle C_{aug}, \tilde \Pi \rangle + \lambda \left[ \|\tilde s \|_1 + \|\tilde t\|_1\right] \\
    & <\langle C_{aug}, \Pi^*_1 \rangle + \lambda \left[\|s^*_1\|_1 + \|t^*_1\|_1\right] \\
    & = \langle C_{\lambda}, \Pi^*_2 \rangle
\end{align*}
which is contradiction as $\Pi^*_2$ is the optimal solution of Formulation 2. This completes the proof for the discrete part.

 \end{proof}

\subsection{Proof of equivalence for two sided formulation}
\label{sec:proof_two_sided}
Here we prove that our two sided formulation, i.e.\ Formulation 3 (\eqref{eq:F_3}) is equivalent to Formulation 1 (\eqref{eq:robot1-d}) for the discrete case. Towards that end, we introduce another auxiliary formulation and show that both Formulation 1 and Formulation 3 are equivalent to the following auxiliary formulation of the problem. 

\textbf{Formulation 4:}
\begin{equation}
\begin{aligned}
& \min\nolimits_{\Pi\in\reals^{m\times n},s_1\in\reals^m, s_2 \in \reals^n} & & \langle C,\Pi\rangle + \lambda \left[\|s_1\|_1 + \|s_2\|_1\right] \\
& \subjectto & & \Pi1_n = p + s_1 \\
& & & \Pi^T1_m = q + s_2 \\
& & & \Pi \succeq 0  
\end{aligned}. 
\label{eq:F_1}
\end{equation}
First we show that Formulation 1 and Formulation 4 are equivalent in a sense that they have the same optimal objective value. 
\begin{theorem}
\label{thm:f12}
Suppose $C$ is a cost function such that $C(x, x) = 0$. Then Formulation 1 and Formulation 4 has same optimal objective value. 
\end{theorem}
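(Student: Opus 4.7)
The plan is to establish $\mathrm{OPT}(F_1)=\mathrm{OPT}(F_4)$ by constructing, in each direction, an explicit value-preserving map between feasible solutions. The intuition is that Formulation~1's augmented transport plan encodes Formulation~4's marginal slacks by routing ``missing'' or ``excess'' mass through auxiliary nodes, and the hypothesis $C(x,x)=0$ ensures that the diagonal entries of the augmented cost matrix contribute nothing, so only the $\lambda$-penalty on the routed mass survives.

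\textbf{From $F_4$ to $F_1$.} I would start with any feasible $(\Pi, s_1, s_2)$ of $F_4$. The mass-balance identity $\mathbf{1}^{\top}(p+s_1) = \mathbf{1}^{\top}(q+s_2) = \mathbf{1}^{\top}\Pi\mathbf{1}$ forces $\mathbf{1}^{\top}s_1 = \mathbf{1}^{\top}s_2$. I would then decompose each slack into its positive and negative parts, $s_1 = s_1^+ - s_1^-$ and $s_2 = s_2^+ - s_2^-$, and use these to build an augmented plan with the block structure of Lemma~\ref{lem:R1-structure}:
\[
\Pi_1 \;=\; \begin{bmatrix}\mathbf{0} & \Pi \\ \mathbf{0} & \diag(v)\end{bmatrix},
\]
where the diagonal block's entries come from the decomposed slack parts, and $(s^*_1,t^*_1)$ are read off to match $(s_1,s_2)$. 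A direct check verifies the augmented row and column sum constraints of $F_1$. The cost then reduces to $\langle C,\Pi\rangle+\lambda(\|s_1\|_1+\|s_2\|_1)$, because the diagonal block contributes $\sum_i C(x_i,x_i)\cdot v_i = 0$ by hypothesis.

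\textbf{From $F_1$ to $F_4$.} For the reverse inequality I would take any optimal $(\Pi^*_1,s^*_1,t^*_1)$ of $F_1$ and invoke Lemma~\ref{lem:R1-structure} to put it in the canonical block form $\Pi^*_1 = \begin{bmatrix}\mathbf{0} & \Pi \\ \mathbf{0} & \diag(t^*_1)\end{bmatrix}$. I would then take $\Pi$ (the top-right block) as the candidate $F_4$ transport plan and read off $(s_1,s_2)$ from $(s^*_1,t^*_1)$, with signs determined by the row and column sum identities already computed in the proof of the discrete version (Section~\ref{sec:proof_discrete}). Marginal feasibility for $F_4$ then follows by summing the block rows and columns of $\Pi^*_1$ against $\mathbf{1}$. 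Objectives match termwise, again using $C(x,x)=0$ to kill the diagonal cost contribution from the bottom-right block.

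\textbf{Main obstacle.} The delicate point is the sign and mass bookkeeping. Formulation~4 allows slacks of arbitrary sign on each side, whereas the constraints in $F_1$ (combined with Lemma~\ref{lem:R1-structure}) force $s^*_1$ and $t^*_1$ into specific sign regimes at optimum. Reconciling this asymmetry cleanly --- in particular, using $\mathbf{1}^{\top}s_1 = \mathbf{1}^{\top}s_2$ to ensure that the decomposed positive and negative parts fit consistently into the augmented block structure, and verifying that the identity $C(x,x)=0$ really eliminates every extraneous diagonal cost --- is where the technical weight of the argument lies.
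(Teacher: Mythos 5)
Your overall architecture (value-preserving maps in both directions, the block-augmented plan, and $C(x,x)=0$ killing the diagonal cost) is the same as the paper's, and your $F_1\to F_4$ direction is fine. But there is a genuine gap in the $F_4\to F_1$ direction, and it is exactly the point you defer to ``sign and mass bookkeeping.'' You propose to embed an \emph{arbitrary} feasible $(\Pi,s_1,s_2)$ of Formulation~4 into Formulation~1 by splitting the slacks into positive and negative parts and loading them onto the diagonal block. This cannot work when $s_2$ has a positive entry: if $\big(\Pi^{\top}\mathbf{1}\big)_j=q_j+s_{2,j}>q_j$, the $j$-th column of the augmented plan already exceeds $q_j$ using only the top-right block, and since every block must be entrywise nonnegative while Formulation~1 imposes the column-sum \emph{equality} with $q$, no choice of the diagonal entry $v_j\ge 0$ can repair this. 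The identity $\mathbf{1}^{\top}s_1=\mathbf{1}^{\top}s_2$ does not help; the obstruction is coordinatewise, not aggregate.

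The missing ingredient is not bookkeeping but a separate optimality argument: the paper's Lemma~\ref{lem:negativity}, which shows that any \emph{optimal} triple of Formulation~4 has $s^*_{4,1}\preceq 0$ and $s^*_{4,2}\preceq 0$ (by a mass-reallocation argument that strictly improves the objective whenever a slack coordinate is positive). With that lemma in hand, the embedding only ever needs to be applied to an optimal Formulation~4 solution, the diagonal block is simply $-\diag(\tilde s_{4,2})\succeq 0$, the column sums restore $q$ exactly, and the objective is preserved. So either prove (or invoke) the non-positivity of the optimal slacks, or show that any feasible Formulation~4 solution can be modified, without increasing the objective, into one with non-positive slacks; as written, your construction is asserted for solutions for which it does not exist.
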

\begin{proof}
Towards that end, we show that given one optimal variables of one formulation we can get optimal variables of other formulation with the same objective value. Before going into details we need the following lemma whose proof is provided in Appendix B: 
\begin{lemma}
\label{lem:negativity}
Suppose $\Pi^*_{4}, s^*_{4, 1}, s^*_{4, 2}$ are the optimal variables of Formulation 4. Then $s^*_{4, 1} \preceq 0$ and $s^*_{4, 2} \preceq 0$. 
\end{lemma}


\noindent
Now we prove that optimal value of Formulation 1 and Formulation 4 are same. Let $(\Pi^*_{1}, s^*_{1,1}, t^*_{1,1})$ is an optimal solution of Formulation 1. Then we claim that $(\Pi^*_{1}, s^*_{1,1}, t^*_{1,1})$ is also an optimal solution of Formulation 4. Clearly it is feasible solution of Formulation 4. Suppose it is not optimal, i.e.\ there exists another optimal solution $(\tilde \Pi_{4}, \tilde s_{4, 1}, \tilde s_{4, 2})$ such that: 
$$
\langle C, \tilde \Pi_{4} \rangle + \lambda(\|\tilde s_{4, 1}\|_1 + \|\tilde s_{4, 2}\|_2) < \langle C, \Pi^*_{1, 12} \rangle + \lambda(\|s^*_{1, 1}\|_1 + \|t^*_{1, 1}\|_1)
$$
Now based on $(\tilde \Pi_{4}, \tilde s_{4, 1}, \tilde s_{4, 2})$ we construct a feasible solution of Formulation 1 as follows: 
$$
\tilde \Pi_{1} = 
\begin{bmatrix}
\mathbf{0} & \tilde \Pi_{4} \\
\mathbf{0} & -\diag(\tilde s_{4, 2})
\end{bmatrix}
$$
Note that we proved in Lemma \ref{lem:negativity} $\tilde s_{4, 2} \preceq 0$, hence we have $\tilde \Pi_{1} \succeq 0$. Now as the column sums of $\tilde \Pi_{4}$ is $q + \tilde s_{4, 2}$, we have column sums of $\tilde \Pi_{1} = [\mathbf{0} \ \ q^{\top}]^{\top}$ and the row sums are $[(p+\tilde s_{4, 1})^{\top} \ \ \ \tilde s_{4, 2}^{\top}]^{\top}$. Hence we take $\tilde s_{1, 1} = \tilde s_{4, 1}$ and $\tilde s_{1, 2} = \tilde s_{4, 2}$. 
Then it follows: 
\begin{align*}
    & \langle C_{aug}, \tilde \Pi_{1} \rangle + \lambda \left[\|\tilde s_{1 , 1}\|_1 + \|\tilde s_{1, 2} \|_1\right] \\
    & = \langle C, \tilde \Pi_{4} \rangle + \lambda \left[\|\tilde s_{4 , 1}\|_1 + \|\tilde s_{4, 2} \|_1\right] \\
    & < \langle C, \Pi^*_{1, 12} \rangle + \lambda \left[\| s^*_{1, 1}\|_1 + \|t^*_{1, 1}\|_1\right] \\
    & = \langle C_{aug}, \Pi^*_{1} \rangle + \lambda \left[\| s^*_{1, 1}\|_1 + \|t^*_{1, 1}\|_1\right]
\end{align*}
This is contradiction as we assumed $(\Pi^*_{1}, s^*_{1, 1}, t^*_{1, 2})$ is an optimal solution of Formulation 1. Therefore we conclude $(\Pi^*_{1}, s^*_{1,1}, t^*_{1,1})$ is also an optimal solution of Formulation 4 which further concludes Formulation 1 and Formulation 4 have same optimal values. This completes the proof of the theorem. 
\end{proof}
\begin{theorem}
\label{thm:f13}
The optimal objective value of Formulation 3 and Formulation 4 are same. 
\end{theorem}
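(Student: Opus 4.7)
The plan is to follow the same template as Theorem \ref{thm:f12}: I would establish the equivalence by showing both inequalities $\mathrm{opt}(F_3)\le \mathrm{opt}(F_4)$ and $\mathrm{opt}(F_4)\le \mathrm{opt}(F_3)$, each via an explicit construction that maps an optimum of one formulation to a feasible point of the other with identical objective value. A contradiction-style argument of exactly the kind used in Theorem \ref{thm:f12} then gives equality of the optimal objectives.

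For $\mathrm{opt}(F_3)\le \mathrm{opt}(F_4)$, I start from an optimal $(\Pi^*_4, s^*_{4,1}, s^*_{4,2})$ of Formulation 4. By Lemma \ref{lem:negativity} both slacks are non-positive, so $-s^*_{4,1}$ and $-s^*_{4,2}$ are interpretable as the outlier masses to be removed from $p$ and $q$. I would then assemble a doubly-augmented transport plan that places $\Pi^*_4$ in the original-to-original block, $\mathrm{diag}(-s^*_{4,1})$ on the diagonal of the original-to-dummy-target block, $\mathrm{diag}(-s^*_{4,2})$ on the diagonal of the dummy-source-to-original block, and zero in the dummy-to-dummy block. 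The Formulation 4 marginal equalities immediately imply that the row and column sums of this augmented plan match the augmented source/target of Formulation 3, and the augmented cost evaluates to $\langle C,\Pi^*_4\rangle + \lambda(\|s^*_{4,1}\|_1 + \|s^*_{4,2}\|_1)$ since each unit of augmented transport costs exactly $\lambda$.

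For the reverse direction, I begin with an optimal $\Pi^*_3$ of Formulation 3. The key ingredient is a two-sided analogue of Lemma \ref{lem:R1-structure}: every optimal Formulation 3 plan can, without loss of generality, be taken to have exactly the block structure above (zero dummy-to-dummy block and diagonally-supported off-diagonal augmentation blocks). Given such a plan, I read off the main block as $\Pi_4$ and set $s_{4,1},s_{4,2}$ to be the negatives of the outlier masses routed through the augmented indices. The Formulation 3 marginal constraints then yield a feasible Formulation 4 triple with matching objective, and optimality of $\Pi^*_3$ rules out the possibility that this triple is strictly suboptimal for Formulation 4 (otherwise the forward construction would, in turn, produce a strictly better Formulation 3 plan, a contradiction).

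The main obstacle will be the two-sided structural lemma. The one-sided argument of Lemma \ref{lem:R1-structure} relied on rerouting mass placed in the ``wrong'' block back into the main block without increasing cost. In the two-sided setting, any mass an optimal plan puts in the dummy-to-dummy block must be eliminated by a coordinated rearrangement on both axes, namely routing paired augmented-row and augmented-column mass back into the main block while keeping all four marginals intact; and any off-diagonal entries of the two augmentation blocks must be collapsed onto the diagonal using the fact that augmented transport has uniform cost $\lambda$ (so the swap is cost-neutral). These rearrangements preserve feasibility and are non-increasing in the objective, which is all that is required. Once this structural lemma is in place, the remainder of the argument is bookkeeping, and combining with Theorem \ref{thm:f12} completes the chain $\mathrm{opt}(F_1) = \mathrm{opt}(F_3) = \mathrm{opt}(F_4)$.
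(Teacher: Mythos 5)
Your plan follows essentially the same route as the paper: the paper proves exactly your two-sided structural lemma (Lemma~\ref{lem:struct_f3}, which states that $\Pi^*_{3,11}$ and $\Pi^*_{3,22}$ are diagonal and $\Pi^*_{3,21}=\mathbf{0}$), pairs it with a sign lemma on the four slacks (Lemma~\ref{lem:negativity_2}), reads off $(\Pi^*_{3,12},\, s^*_{3,1}-s^*_{3,2},\, t^*_{3,1}-t^*_{3,2})$ as a feasible Formulation~4 point of equal objective, and closes the loop by converting a hypothetically better Formulation~4 optimum back into a feasible Formulation~3 point---your two inequalities phrased as a single contradiction. The one place your bookkeeping diverges is the forward direction: you charge the removed mass through an augmented cost of $\lambda$ per unit routed to dummy points, whereas in Formulation~3 the $\lambda$ enters through explicit $\ell_1$ penalties on the four slack variables (the augmented cost of the constructed plan is just $\langle C,\tilde\Pi_4\rangle$, with the dummy mass sitting on a zero-cost diagonal block); correspondingly, the step your sketch leaves implicit is the sign lemma $s^*_{3,1}\preceq 0$, $s^*_{3,2}\succeq 0$ needed to conclude $\|s^*_{3,1}-s^*_{3,2}\|_1=\|s^*_{3,1}\|_1+\|s^*_{3,2}\|_1$ so that the objectives actually match.
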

\begin{proof}
Like in the proof of Theorem \ref{thm:f12}
we also prove couple of lemmas. 
\begin{lemma}
\label{lem:struct_f3}
Any optimal transport plan $\Pi^*_{3}$ of Formulation 3 has the following structure: If we write,
\[
\Pi^*_{3} = 
\begin{bmatrix}
\Pi^*_{3, 11} & \Pi^*_{3, 12} \\
\Pi^*_{3, 21} & \Pi^*_{3, 22} 
\end{bmatrix}
\]
then $\Pi^*_{3, 11}$ and $\Pi^*_{3, 22}$ are diagonal matrices and $\Pi^*_{3, 21} = \mathbf{0}$. 
\end{lemma}

\begin{lemma}
\label{lem:negativity_2}
If $s^*_{3, 1}, t^*_{3, 1},  s^*_{3, 2}, t^*_{3, 2}$ are four optimal slack variables in Formulation 3, then $s^*_{3, 1}, t^*_{3, 1} \preceq 0$ and $s^*_{3, 2}, t^*_{3, 2} \succeq 0$. 
\end{lemma}
\begin{proof}
The line of argument is same as in proof of Lemma \ref{lem:negativity}. 
\end{proof}
Next we establish equivalence. Suppose $(\Pi^*_{3}, s^*_{3, 1}, t^*_{3, 1}, s^*_{3, 2}, t^*_{3, 2})$ are optimal values of Formulation 3. We claim that $(\Pi^*_{3, 12}, s^*_{3,1} - s^*_{3,2}, t^*_{3, 1} - t^*_{3, 2})$ forms an optimal solution of Formulation 4. The objective value will then also be same as $s^*_{3, 1} \preceq 0, s^*_{3, 2} \succeq 0$ (Lemma \ref{lem:negativity_2}) implies $\|s^*_{3, 1} - s^*_{3, 2}\|_1 = \|s^*_{3, 1}\|_1 + \|s^*_{3, 2}\|_1$ and similarly $t^*_{3, 1} \preceq 0, t^*_{3, 2} \succeq 0$ implies $\|t^*_{3, 1} - t^*_{3, 2}\|_1 = \|t^*_{3, 1}\|_1 + \|t^*_{3, 2}\|_1$. Feasibility is immediate. Now for optimality, we again prove by contradiction. Suppose they are not optimal. Then lets say $\tilde \Pi_{4}, \tilde s_{4, 1}, \tilde s_{4, 2}$ are an optimal triplet of Formulation 4. Now construct another feasible solution of Formulation 3 as follows: 
Set $\tilde s_{3, 2} = \tilde t_{3, 2} = 0, \tilde s_{3, 1} = \tilde s_{4, 1} $ and $\tilde t_{3, 1} = \tilde s_{4, 2}$. Set the matrix as: 
\[
\tilde \Pi_{3} = \begin{bmatrix}
\mathbf{0} & \tilde \Pi_{4} \\
\mathbf{0} & -\diag(\tilde s_{4, 2}) 
\end{bmatrix}
\]
Then it follows that $\left(\tilde \Pi_{3}, \tilde s_{3, 1}, \tilde s_{3, 2}, \tilde t_{3, 1}, \tilde t_{3, 2}\right)$ is a feasible solution of Formulation 3. Finally we have: 
\begin{align*}
& \langle C_{aug}, \tilde \Pi_{3} \rangle + \lambda \left[\|\tilde s_{3, 1}\|_1 + \| \tilde s_{3, 2} \|_1 + \| \tilde t_{3, 1} \|_1 + \| \tilde t_{3, 2} \|_1 \right] \\
    & = \langle C_{aug}, \tilde \Pi_{3} \rangle + \lambda \left[\|\tilde s_{4, 1}\|_1 + \| \tilde s_{4, 2} \|_1 \right] \\
    & = \langle C, \tilde \Pi_{4} \rangle + \lambda \left[\|\tilde s_{4, 1}\|_1 + \| \tilde s_{4, 2} \|_1 \right] \\
    & < \langle C, \Pi^*_{3, 12} \rangle + \lambda \left[\|s^*_{3,1} - s^*_{3,2}\|_1 + \|t^*_{3, 1} - t^*_{3, 2}\|_1 \right] \\
    & = \langle C_{aug}, \Pi^*_{3} \rangle + \lambda \left[\|s^*_{3, 1}\|_1 + \|s^*_{3, 2}\|_1 + \|t^*_{3, 1}\|_1 + \|t^*_{3, 2}\|_1 \right] 
\end{align*}
This contradicts the optimality of $(\Pi^*_{3}, s^*_{3, 1}, s^*_{3, 2}, t^*_{3, 1}, t^*_{3, 2})$. This completes the proof.
\end{proof}

\subsection{Proof of continuous version}
 \begin{proof}
 
In this proof we denote by $F_1$ the optimization problem of \eqref{eq:robot1-cts} and by $F_2$ the optimization problem  \eqref{eq:robot2-cts}. Let $\mu, \nu$ be two absolutely continuous measures on $\mathbb{R}^d$. Moreover, we assume $c(x,y)=\|x-y\|$ for some norm $\|\cdot\|$ on $\mathbb{R}^d$. We assume that $\int \|x\| \nu(\mathrm{d}x), \int \|x\| \mu(\mathrm{d}x) < \infty$.

\textbf{Step 1:}  Let $K_{\epsilon}$ be a compact set such that $\int_{K_{\epsilon}} \|x\|\mu(\mathrm{d}x), \int_{K_{\epsilon}} \|x\|\nu(\mathrm{d}x) >1-\epsilon$. 

Also, let $\tilde{K}_\epsilon=\{x_1,\dots,x_{n_\epsilon}\}$ be a maximal $\epsilon$-packing set of $K_{\epsilon}$. Starting from $\tilde{K}_\epsilon$, define $\{S_1,\dots, S_{n_{\epsilon}}\}$ as a mutually disjoint covering of $K_{\epsilon}$ with internal points $x_1,\dots, x_{n_{\epsilon}}$ respectively, so that Diam$(S_i) \leq 2\epsilon$.
With $p_i=\int_{S_i} \mu(\mathrm{d}x)$, $q_i=\int_{S_i} \nu(\mathrm{d}x)$ for 
$i=1,\dots, n_{\epsilon}$, $p_0=\int_{K_{\epsilon}^C} \mu(\mathrm{d}x)$, $q_0=\int_{K_{\epsilon}^C} \nu(\mathrm{d}x)$ and $x_0=0 \in \mathbb{R}^d$, define
\begin{eqnarray}
\mu_{\epsilon}&=&\sum_0^{n_{\epsilon}}p_i \delta_{x_i} \nonumber \\
\nu_{\epsilon}&=&\sum_0^{n_{\epsilon}}q_i \delta_{x_i} \nonumber
\end{eqnarray}

A coupling $Q$ between two probability distributions is a joint distribution with marginals as the given two distributions. The Wasserstein distance between two distributions $P_1$ and $P_2$ is defined as:
\begin{eqnarray}
W_1(P_1,P_2)= \inf_{Q\in \mathscr{Q}(P_1,P_2)} \int Q(x,y)\|x-y\|\mathrm{d}x\mathrm{d}y,
\end{eqnarray}
where $\mathscr{Q}(P_1,P_2)$ is the collection of all couplings of $P_1$ and $P_2$.

Define $Q(x,y)= (\mathds{1}_{x=x_0, y \in K_{\epsilon}^C}+ \sum_{i=1}^{n_\epsilon} \mathds{1}_{x=x_i, y \in S_i})\mu(\mathrm{d}y)$. Then $Q$ is a coupling between $\mu$ and $\mu_{\epsilon}$.
 Therefore, clearly,
 \begin{eqnarray}
 W_1(\mu,\mu_{\epsilon})\leq \int_{K_{\epsilon}^C} \|x\|\mu(\mathrm{d}x) +2\epsilon \left(\sum_{i=1}^{n_{\epsilon}} p_i\right) \leq 3 \epsilon
 \end{eqnarray}
 
 Similarly, $W_1(\nu,\nu_{\epsilon}) \leq 3 \epsilon$. Therefore $\lim_{\epsilon \to 0} W_1(\nu,\nu_{\epsilon})=0$. 
 
 Moreover, $W_1(\mu,\nu)=\lim_{\epsilon \to 0} W_1(\mu_{\epsilon},\nu_{\epsilon})$, as $W_1(\mu_{\epsilon},\nu_{\epsilon})-6 \epsilon \leq W_1(\mu,\nu) \leq W_1(\mu_{\epsilon},\nu_{\epsilon})+6 \epsilon$ by triangle inequality.

\textbf{Step 2:} 
Let $S$ be an arbitrary measure with $\|S\|_{ \mathrm{TV}}= 2\gamma$, so that $\mu+S$ is a probability measure with $\int \|x\| (\mu+S) (\mathrm{d}x) <\infty$. Also, let us define $\epsilon_n=2^{-(n+1)}$. 

Let $S=S^+-S^-$, where $S^+$ and $S^-$ are positive measures on $\bbR^d$. Then, $\|S^-\|_{ \mathrm{TV}}=\|S^+\|_{ \mathrm{TV}} =\gamma $.

Clearly $(\mu-S^{-})/(1-\gamma),\mu, \nu,S^+/\gamma$ are tight probability measures. So we can construct compact sets $K_{\epsilon_n}^{(1)}$, similar to Step 1 to approximate all the four measures. Without loss of generality we assume that $0 \in K_{\epsilon_n}^{(1)}$ for all $n$. Moreover, we can also construct approximate measures $(\mu-S^-)_n=((\mu-S^-)/(1-\gamma))_{\epsilon_n}$ and $(S^+)_n= (S^+/\gamma)_{\epsilon_n}$ defined as in Step 1. $\mu_n=\mu_{\epsilon_n},\nu_n= \nu_{\epsilon_n}$ are defined similarly. All four of the measures have support points in $K_{\epsilon_n}^{(1)}$. 

Next, we define $(\mu+S)_n=\gamma(S^+)_n +(1-\gamma)(\mu-S^-)_n$. Then by the construction, from~\citep{Villani-09}, $\lim_{n \to \infty}W_1((\mu+S)_n,\mu+S) \to 0$ and thus $\lim_{n \to \infty}W_1((\mu+S)_n,\nu_n) \to W_1(\mu+S,\nu)$. Therefore we can define a signed measure $S_n=(\mu+S)_n- \mu_n$. Moreover, 
\begin{align}
\label{eq:STV}
\|S_n\|_{ \mathrm{TV}} & \leq \gamma\|(S^+)_n\|_{ \mathrm{TV}} +\|(1-\gamma)(\mu-S^-)_n-\mu_n\|_{ \mathrm{TV}} \\
& = 2\gamma= \|S\|_{ \mathrm{TV}}
\end{align}

Note that $\mu_n,\nu_n,(\mu+S)_n$ put masses (sometimes zero masses) on a common set of support points given by $\tilde{K}_{\epsilon_n}^{(1)} \subset K_{\epsilon_n}^{(1)}$. 

The $\tilde{K}_{\epsilon_n}^{(1)}$ is sequentially defined so that $\tilde{K}_{\epsilon_{n+1}}^{(1)}$ is a refinement of $\tilde{K}_{\epsilon_n}^{(1)}$. This can easily be achieved by the choice of $\epsilon_n$ defined. 
 
Consider $\tilde{s}_n,\Pi_n$ such that 
\begin{eqnarray}
\label{eq: argmin s}
F_1(\mu_n,\nu_n)=\int\|x-y\|\Pi_n(\mathrm{d}x\mathrm{d}y) +\lambda\|\tilde{s}_n\|_{ \mathrm{TV}}
\end{eqnarray}
 By the discrete nature of $\mu_n,\nu_n$, using the proof of the discrete part $F_1(\mu_n,\nu_n)= F_2(\mu_n,\nu_n)$.
 
 Since, $\min\{\|x-y\|,2\lambda\}$ is a metric, whenever $\|x-y\|$ is, therefore, it is easy to check that $F_2(\mu,\nu)=\lim_n F_2(\mu_n,\nu_n)=\lim_n F_1(\mu_n,\nu_n)$. 
 
 Moreover, by construction, $F_1(\mu_n,\nu_n) \leq \int\|x-y\|\Pi(\mathrm{d}x\mathrm{d}y) +\lambda\|S_n\|_{ \mathrm{TV}}$ for any arbitrary coupling $\Pi$ of $\mu$ and $\mu+S$. Also $\lim_n W_1(\mu_n,\mu), W_1(\mu+S,(\mu+S)_n) \to 0$. 
 
 Thus, combining the above result with \eqref{eq:STV}, we get
 \begin{eqnarray}
 \lim_n F_1(\mu_n,\nu_n) \leq \int\|x-y\|\tilde{\Pi}(\mathrm{d}x\mathrm{d}y) +\lambda\|S\|_{ \mathrm{TV}} \nonumber
 \end{eqnarray}
 for any coupling $\tilde{\Pi}$ of $\mu$ and $\mu+S$. 
 
 Therefore, $F_2(\mu,\nu) \leq F_1(\mu,\nu)$.
 

 
 

\textbf{Step 3:}  Consider $\tilde{s}_n$ defined in \eqref{eq: argmin s}. As $\tilde{s}_n$ has support in the compact sets $K_{\epsilon_n}^{(1)}$ defined in Step 2, therefore, $\{\mu_n+\tilde{s}_n\}_{n \geq 1}$ are tight measures. 

Therefore, by Prokhorov's Theorem for equivalence of sequential compactness and tightness for a collection of measures, there exists a probability measure 
$\mu \oplus s$ and a subsequence $\{n_k\}_{k \geq 1}$ such that $\mu_{n_k}+\tilde{s}_{n_k}$ converges weakly to $\mu \oplus s$. Moreover, by construction  $\lim_{R\to \infty} \limsup_{n \to \infty} \bigintss_{\|x\|>R} \|x\| (\mu_n +\nu_n)(\text{d}x) =0$ and so $\lim_{R\to \infty} \limsup_{n \to \infty} \bigintss_{\|x\|>R} \|x\|(\mu_n + \tilde{s}_n)(\text{d}x) =0$. 

Thus, by Definition 6.8 part (iii) and Theorem 6.9 of~\citep{Villani-09}, $W_1(\mu_{n_k}+\tilde{s}_{n_k}, \mu\oplus s)\to 0 $. Moreover, $W_1(\mu_{n_k},\mu)\to 0$. Therefore $\|\tilde{s}_{n_k}\|_{ \mathrm{TV}} \to \|\mu \oplus s -\mu\|_{ \mathrm{TV}}$. Thus, $W_1(\mu_{n_k}+\tilde{s}_{n_k}, \nu_{n_k}) + \lambda\|\tilde{s}_{n_k}\|_{ \mathrm{TV}} \to W_1(\mu \oplus s,\nu)+\lambda\|\mu \oplus s -\mu\|_{ \mathrm{TV}}$. But by the proof of the discrete part,
$W_1(\mu_{n_k}+\tilde{s}_{n_k}, \nu_{n_k}) + \lambda\|\tilde{s}_{n_k}\|_{ \mathrm{TV}} = F_1(\mu_{n_k},\nu_{n_k}) =F_2(\mu_{n_k},\nu_{n_k}) \to F_2(\mu, \nu)$. Therefore, with $s=\mu \oplus s -\mu$, $ W_1(\mu + s,\nu)+\lambda\|s\|_{ \mathrm{TV}}=  F_2(\mu,\nu)$. 
 
 Therefore, $F_2(\mu,\nu)=\limsup_{n \to \infty} F_1(\mu_n,\nu_n) \geq F_1(\mu,\nu)$. Thus the equality holds.

 \end{proof}

\section{Proof of Theorem \ref{thm:bound}}
\label{sec:theorem_bnd}
\begin{proof}
The proof is immediate from the Formulation 1. Recall that the Formulation 1 can restructured as: 
$$
\mathrm{ROBOT}(\tilde \mu, \nu) = \inf_{P} \left\{\mathrm{OT}(P, \nu) + \lambda \|P - \tilde \mu\|_{ \mathrm{TV}}\right\} \,.
$$
where the infimum is taking over all measure dominated by some common measure $\sigma$ (with respect to which $\mu, \mu_c, \nu$ are dominated). Hence,  
$$
\mathrm{ROBOT}(\tilde \mu, \nu) \le \mathrm{OT}(P, \nu) + \lambda \|P - \tilde \mu\|_{ \mathrm{TV}}
$$ 
for any particular choice of $P$. Taking $P = \mu$ we get that 
\begin{align*}
    \mathrm{ROBOT}(\tilde \mu, \nu) & \le \mathrm{OT}(\mu, \nu) + \lambda \|\mu - \tilde \mu\|_{ \mathrm{TV}} \\
    & = \mathrm{OT}(\mu, \nu) + \lambda \eps \|\mu - \mu_c\|_{ \mathrm{TV}}
\end{align*}
Taking $P = \nu$ we get $ \mathrm{ROBOT}(\tilde \mu, \nu) \le  \lambda \|\nu - \tilde \mu\|_{ \mathrm{TV}}$ and finally taking $P = \tilde \mu$ we get $ \mathrm{ROBOT}(\tilde \mu, \nu) \le \mathrm{OT}(\tilde \mu, \nu)$. This completes the proof. 
\end{proof}

\section{Proof of Lemma \ref{lem:entropy-f2-f1}}
\label{sec:lemma_sinkhorn}
As defined in the main text, let $\Pi^*_2$ be the optimal solution of~\eqref{eq:robot2-d} and $\Pi^*_{2, \alpha}$ be the optimal solution of \eqref{eq:robot2-d-entropy}. Then by Proposition 4.1 from ~\citet{peyre2018Computational} we conclude: 
\begin{equation}
    \label{eq:cuturi}
    \Pi^*_{2, \alpha} \overset{\alpha \to 0}{\longrightarrow} \Pi^*_2 \,.
\end{equation}
Now we have defined $\left(\Pi^*_{1, \alpha}, \bs^*_{1, \alpha}\right)$ as the \emph{approximate} solution of \eqref{eq:robot1-d} obtained via Algorithm \ref{algo:f1-f2} from $\Pi^*_{2, \alpha}$. Note that we can think of Algorithm \ref{algo:f1-f2} as a map from $ \reals^{m \times n}$ to $\reals^{(m+n) \times (m + n)} \times \reals^{m}$. Define this map as $F$. 
$$
F(\Pi_2) \mapsto (\Pi_1, \bs_1)
$$
Hence, by our notation, $\left(\Pi^*_{1, \alpha}, \bs^*_{1, \alpha}\right) = F(\Pi^*_{2, \alpha})$ and $\left(\Pi^*_{1}, \bs^*_{1}\right) = F(\Pi^*_2)$. Now if we show that $F$ is a continuous map, then by continuous mapping theorem, it is also immediate from \eqref{eq:cuturi} that: 
\begin{align*}
    F(\Pi^*_{2, \alpha}) \overset{\alpha \to 0}{\longrightarrow} F(\Pi^*_2) \,.
\end{align*}
which implies:
\begin{align*}
    \Pi^*_{1, \alpha} & \overset{\alpha \to 0}{\longrightarrow} \Pi^*_1 \\ 
    \bs^*_{1, \alpha} & \overset{\alpha \to 0}{\longrightarrow} \bs^*_1 \,. 
\end{align*}
which will complete the proof. Therefore all we need to show is that $F$ is a continuous map. Towards that direction, first fix a sequence of matrices $\{\bar \Pi_{2, i}\}_{i \in \bbN} \to \bar \Pi_2$. Define $F(\bar \Pi_{2, i}) = \left(\bar \Pi_{1, i}, \bar \bs_{1, i}\right)$ and $F(\bar \Pi_{2}) = \left(\bar \Pi_{1}, \bar \bs_{1}\right)$. By Step 3 - Step 5 of Algorithm \ref{algo:f1-f2}, we obtain $\bar \Pi_{1, i}$ by first setting $\bar \Pi_{1, i, 12} = \bar \Pi_{2, i}$ and for each of the columns of $\bar \Pi_{1, i, 12}$, dumping the sum of its entries for which the cost is $> 2 \lambda$ to the diagonals of $\bar \Pi_{1, i, 22}$. Also, we have all the entries of the first $n$ columns of $\bar \Pi_{1, i}$ to be $0$. In step 6 of Algorithm \ref{algo:f1-f2}, we obtain $\bs_{1, i}$ by taking the negative of the sum of the elements of each rows of $\bar \Pi_{1, i, 12}$ for which the cost is $> 2\lambda$. Note that these operations (Step 3 - Step 6 of Algorithm \ref{algo:f1-f2}) are continuous. Therefore we conclude:  
\begin{enumerate}
    \item $0 = \bar \Pi_{1, i, 11} \to \bar \Pi_{1, 11} = 0 \,.$
    \item $0 = \bar \Pi_{1, i, 21} \to \bar \Pi_{1, 21} = 0 \,.$ 
    \item $\bar \Pi_{1, i, 12} = \bar \Pi_{2, i} \odot \mathds{1}_{\cI^c} \to \bar \Pi_{2} \odot \mathds{1}_{\cI^c}  = \bar \Pi_{1, 12} \,.$
    \item \begin{align*}
        \bar \Pi_{1, i, 22} & = \diag\left(\mathbf{1}^{\top}\left(\bar \Pi_{2, i} \odot \mathds{1}_{\cI}\right)\right) \\
        & \to \diag\left(\mathbf{1}^{\top}\left(\bar \Pi_{2} \odot \mathds{1}_{\cI}\right)\right) \\
        & = \bar \Pi_{1, 22} \,.
    \end{align*}
    \item \begin{align*}
\bs_{1, i} & = -\left(\bar \Pi_{i, n} \odot \mathds{1}_{\cI}\right)\mathbf{1} \\
& \to  -\left(\bar \Pi_2 \odot \mathds{1}_{\cI}\right)\mathbf{1} = \bs_1 \,.
\end{align*}
\end{enumerate}
where $A \odot B$ denotes the Hadamard product (element-wise multiplication) between two matrices. Hence we have established: 
\begin{align*}
F(\bar \Pi_{2, i}) & = \left(\bar \Pi_{1, i}, \bar \bs_{1, i}\right)  \\
& \overset{n \to \infty}{\longrightarrow} \left(\bar \Pi_{1}, \bar \bs_{1}\right) \\
& = F(\bar \Pi_2) \,.
\end{align*}
This completes the proof of continuity of $F$.

 \section{Proof of auxiliary lemmas}
 \subsection{Proof of Lemma \ref{lem:R1-structure}}
 \begin{proof}
The fact that $\Pi^*_{1, 11} = \Pi^*_{1, 21} = \mathbf{0}$ follows from the fact that $\Pi^*_1 \succeq 0$ and $\Pi^*_1\mathbf{1} = \bQ$. To prove that $\Pi^*_{1, 22}$ is diagonal, we use the fact that the any diagonal entry the cost matrix is $0$. Now suppose $\Pi^*_{1, 22}$ is not diagonal. Then define a matrix $\hat \Pi$ as following: set $\hat \Pi_{11} = \hat \Pi_{21} = \mathbf{0}$, 
$\hat \Pi_{12} = \Pi^*_{1, 12}$ and: 
\[
\hat \Pi_{22}(i, j) = 
\begin{cases}
\sum_{k=1}^m \Pi^*_{1, 22}(k, i), & \text{if } j = i \\
0, & \text{if } j \neq i
\end{cases}
\]
Also define $\hat s = s^*_1$ and $\hat t$ as $\hat t(i) =  \hat \Pi_{22}(i, i)$. Then clearly $(\hat \Pi, \hat s, \hat t)$ is a feasible solution of Formulation 1. Note that:
$$
\|\hat t\|_1 = 1^{\top}\hat \Pi_{22} 1 = 1^{\top}\Pi^*_{1, 22} 1 = \|t^*_1\|_1 
$$
and by our construction $\langle C_{aug}, \hat \Pi \rangle < \langle C_{aug}, \Pi^*_1 \rangle$. Hence $(\hat \Pi, \hat s, \hat t)$ reduces the value of the objective function of Formulation 1 which is a contradiction. This completes the proof.   
\end{proof}

\subsection{Proof of Lemma \ref{lem:f2_characterization}}
\begin{proof}
\begin{enumerate}
    \item Suppose $\Pi^*_1(i, j) > 0$. Then dump this mass to $s^*_1(j)$ and make it $0$. In this way $\langle C_{aug}, \Pi^*_1 \rangle$ will decrease by $> 2 \lambda \Pi^*_1(i, j)$ and the regularizer value will increase by atmost $2 \lambda \Pi^*_1(i, j)$, resulting in overall reduction in the objective value, which leads to a contradiction. 
    \item Suppose each entry of $i^{th}$ row of $C$ is $< 2 \lambda$. Then if $s^*_1(i) > 0$, we can distribute this mass in the $i^{th}$ row such that, $s^*_1(i) = a_1 + a_2 + \dots + a_m$ with the condition that $t^*_1(j) \ge a_j$. Now we reduce $t^*_1$ as: 
$$
t^*_1(j) \leftarrow t^*_1(j) - a_j
$$
Hence the value $\langle C_{aug}, \Pi^*_1(i, j) \rangle$ will increase by a value $< 2\lambda s^*_1(i)$ but the value of regularizer will decrease by the value of $2 \lambda s^*_1(i)$, resulting in overall decrease in the value of objective function. 
\item Same as proof of part (2) by interchanging row and column in the argument.
\item Suppose not. Then choose $\eps < s^*_1(i) \wedge t^*_1(j)$, Add $\eps$ to $\Pi^*_1(i, j)$. Hence the cost function value $\langle C_{aug}, \Pi^*_1 \rangle$ will increase by $ < 2\lambda \eps$ but the regularizer value will decrease by $2 \lambda \eps$, resulting in overall decrease in the objective function. 
\end{enumerate}
\end{proof}

\subsection{Proof of Lemma \ref{lem:negativity}}
\begin{proof}
For the notational simplicity, we drop the subscript $4$ now as we will only deal with the solution of Formulation 4 and there will be no ambiguity. We prove the Lemma by contradiction. Suppose $s^*_{1, i} > 0$. Then we show one can come up with another solution $(\tilde \Pi, \tilde s_1, \tilde s_2)$ of Formulation 4 such that it has lower objective value. To construct this new solution, make: 
\[
\tilde s_{1, j} = 
\begin{cases}
s^*_{1, j}, & \text{if } j \neq i \\
0, & \text{if } j = i
\end{cases}
\]
Now to change the optimal transport plan, we will only change $i^{th}$ row of $\Pi^*$. We subtract $a_1, a_2, \dots, a_n \ge 0$ from $i^{th}$ column of $\Pi^*$ in such a way, such that none of the elements are negative. Hence the column sum will be change, i.e.\ the value of $\tilde s_2$ will be: 
\[
\tilde s_{2, j} = s^*_{2, j} - a_j \ \ \ \forall 1 \le j \le n \,.
\]
Now clearly from our construction: 
$$
\langle C, \tilde \Pi \rangle \le \langle C, \Pi^* \rangle 
$$
For the regularization part, note that, as we only reduced $i^{th}$ element of $s^*_1$, we have $\|\tilde s_1\|_1 = \|s^*_1\|_1 - s^*_{1, i}$. And by simple triangle inequality, 
$$\|\tilde s_2 \|_1 \le \|s^*_2 \|_1 + \|a_1\|_1 = \|s^*_2 \|_1 + s^*_{1, i}
$$
by construction $a_i$'s, as $a_i \ge 0$ and $\sum_i a_i = s^*_{1, i}$. Hence we have: 
$$
\|\tilde s_1\|_1 + \|\tilde s_2\|_1 \le \|s^*_1\|_1 - s^*_{1, i} + \|s^*_2\|_1 + s^*_{1, i} = \|s^*_1\|_1 + \|s^*_2\|_1 \,.
$$
Hence the value corresponding to regularizer will also decrease. This completes the proof. 
\end{proof}

\subsection{Proof of Lemma \ref{lem:struct_f3}}
\begin{proof}
We prove this lemma by contradiction. Suppose $\Pi^*_{3}$ does not have the structure mentioned in the statement of Lemma. Construct another transport plan for Formulation 3 $\tilde \Pi_{3}$ as follows: Keep $\tilde \Pi_{3, 12} = \Pi^*_{3, 12}$ and set $\tilde \Pi_{3, 12} = \mathbf{0}$. Construct the other parts as: 
\begin{align*}
& \hspace{-1em}\tilde \Pi_{3, 11}(i,j) = \\ 
& \begin{cases}
\sum_{k=1}^{m} \Pi^*_{3, 11}(i, k) + \sum_{k=1}^{n} \Pi^*_{3, 21}(k, i), & \text{if } i = j \\
0, & \text{if } i \neq j
\end{cases}
\end{align*}
and 
\[
\tilde \Pi_{3, 22}(i,j) = 
\begin{cases}
\sum_{k=1}^{n} \Pi^*_{3, 22}(k, i), & \text{if } i = j \\
0, & \text{if } i \neq j
\end{cases}
\]
It is immediate from the construction that: 
$$
\langle C_{aug}, \tilde \Pi_{3} \rangle \le \langle C_{aug}, \Pi^*_{3} \rangle 
$$
As for the regularization term: Note the by our construction $\tilde s_4$ will be same as $s_4^*$ as column sum of $\tilde \Pi_{3, 22}$ is same as $\Pi^*_{3, 22}$. For the other three:
$$
\tilde s_3(i) = \tilde \Pi_{3, 11}(i, i) = \sum_{k=1}^{m} \Pi^*_{3, 11}(i, k) + \sum_{k=1}^{n} \Pi^*_{3, 21}(k, i)
$$
$$
\tilde s_2(i) = \tilde \Pi_{3, 22}(i,i) = \sum_{k=1}^{n} \Pi^*_{3, 22}(k, i)
$$
and hence by construction: 
$$
\|\tilde s_2 \|_1 = \mathbf{1}^{\top} \Pi^*_{3, 22}\mathbf{1} = \|s^*_2\|_1 - \mathbf{1}^{\top} \Pi^*_{3, 21}\mathbf{1}\,.
$$
$$
\|\tilde s_3 \|_1 = \mathbf{1}^{\top} \Pi^*_{3, 11}\mathbf{1} + \mathbf{1}^{\top} \Pi^*_{3, 21}\mathbf{1} = \|s^*_3\|_1
$$
And also by our construction, $\tilde s_1 =  s^*_1 + c$ where $c = (\Pi^*_{3, 21})^{\top}\mathbf{1}$. As a consequence we have $\|c\|_1 = \mathbf{1}^{\top} \Pi^*_{3, 21}\mathbf{1}$. Then it follows: 
\begin{align*}
    \sum_{i=1}^4 \|\tilde s_i\|_1 & = \|s^*_1 + c\| + \|s^*_2\|_1 - \mathbf{1}^{\top} \Pi^*_{3, 21}\mathbf{1} + \|s^*_3\|_1 + \|s^*_4\|_1 \\
    & \le \sum_{i=1}^4 \|s^*_i\|_1 + \|c\|_1 - \mathbf{1}^{\top} \Pi^*_{3, 21}\mathbf{1} \\
    & = \sum_{i=1}^4 \|s^*_i\|_1 
\end{align*}
So the objective value is overall reduced. This contradicts the optimality of $\Pi^*_3$ which completes the proof. 
\end{proof}

\section{Change of support of outliers with respect to $\lambda$}
For any $\lambda$, define the set $\cI_{\lambda} = \{(i, j): C_{i, j} > 2\lambda\}$, i.e. $\cI_\lambda$ denotes the costs which exceeds the threshold $2\lambda$. As before we define by $C_\lambda$ to be truncated cost $C \wedge 2\lambda$. Denote by $\pi_\lambda$ to be the optimal transport plan with respect to $C_\lambda$ and the marginal measures $\mu, \nu$. Borrowing our notations from previous theorems, we define a "slack vector" $\bs_\lambda$ as:
$$
\bs_\lambda(i)= \sum_{j=1}^n  \pi_{\lambda}(i, j) \mathds{1}_{C(i, j) > 2\lambda} = \sum_{j: (i, j) \in \cI_\lambda}\pi_\lambda(i, j) \,.
$$
And we define the observation $i_0$ to be an outlier if $\bs_\lambda(i_0) > 0$. It is immediate that for any $\lambda_1 < \lambda_2$, $\cI_{\lambda_1} \supseteq \cI_{\lambda_2}$. We goal is to establish the following theorem: 

\begin{theorem}
For any $\lambda_1 < \lambda_2$, if $\bs_{\lambda_2}(i_0) > 0$, then $\bs_{\lambda_1}(i_0) > 0$, i.e. if a point is selected as outlier for larger $\lambda$, then it is also selected as outlier for smaller $\lambda$. 
\end{theorem}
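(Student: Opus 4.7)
The plan is to argue by contradiction. Suppose $\bs_{\lambda_2}(i_0)>0$ but $\bs_{\lambda_1}(i_0)=0$. The first assumption yields a column $j^*$ with $\pi_{\lambda_2}(i_0,j^*)>0$ and $C_{i_0,j^*}>2\lambda_2>2\lambda_1$, while the second forces $\pi_{\lambda_1}(i_0,j)=0$ for every $j$ with $C_{i_0,j}>2\lambda_1$; in particular $\pi_{\lambda_1}(i_0,j^*)=0$.

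I will then form the signed matrix $\delta=\pi_{\lambda_1}-\pi_{\lambda_2}$. Since both plans share the marginals $\mu$ and $\nu$, $\delta$ has zero row and column sums, and the standard bipartite circulation decomposition splits it into a non-negative combination of elementary cycles carrying alternating ``$+$/$-$'' signs. Because $\delta(i_0,j^*)<0$, the edge $(i_0,j^*)$ sits as a ``$-$''-edge of some such cycle $\gC$. From $\gC$ I will extract two inequalities: optimality of $\pi_{\lambda_2}$ at $\lambda_2$ (perturbing it infinitesimally along $\gC$ toward $\pi_{\lambda_1}$ does not decrease the $F_2(\lambda_2)$ objective) and optimality of $\pi_{\lambda_1}$ at $\lambda_1$ (the reverse perturbation does not decrease the $F_2(\lambda_1)$ objective). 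Adding them and setting $g=(C\wedge 2\lambda_2)-(C\wedge 2\lambda_1)\in[0,\,2\lambda_2-2\lambda_1]$ produces
\[
\sum_{(i,j)\in\gC^+} g_{ij}\;\ge\;\sum_{(i,j)\in\gC^-} g_{ij}.
\]
Because $g_{i_0,j^*}=2\lambda_2-2\lambda_1$ attains the maximum value of $g$ and the other $\gC$-edge at $i_0$ is a ``$+$''-edge $(i_0,j')$ with $C_{i_0,j'}\le 2\lambda_1$ (so $g_{i_0,j'}=0$), the inequality forces some other ``$+$''-edge $(i_1,j'')$ in $\gC$ to satisfy $g_{i_1,j''}=2\lambda_2-2\lambda_1$, i.e.\ $C_{i_1,j''}\ge 2\lambda_2>2\lambda_1$. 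As a ``$+$''-edge it has $\pi_{\lambda_1}(i_1,j'')>0$, so $\bs_{\lambda_1}(i_1)>0$.

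The hard part will be upgrading the conclusion from ``some row $i_1$ is an outlier at $\lambda_1$'' to ``$i_0$ itself is an outlier at $\lambda_1$,'' which is what the theorem asserts. I expect to close this gap along one of two routes. The first is to iterate the cycle analysis with $(i_1,j'')$ playing the role of $(i_0,j^*)$, tracking a chain of forced outliers that, by finiteness of the bipartite graph, must eventually loop back to $i_0$. The second is to transfer the argument to the LP dual of Formulation~4, where complementary slackness reads $r_1^*(i)>0 \Leftrightarrow u_i^*(\lambda)=\lambda$ at a strictly complementary primal-dual pair, and then to exhibit a dual optimum at $\lambda_1$ with $u_{i_0}^*=\lambda_1$ by capping the $\lambda_2$-dual: $\hat u_i=\min(u_i^*(\lambda_2),\lambda_1)$ and $\hat v_j=\min(v_j^*(\lambda_2),\lambda_1)$, which is clearly dual-feasible at $\lambda_1$ and satisfies $\hat u_{i_0}=\lambda_1$. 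The technical crux along this second route is verifying that the capped point is in fact dual-optimal at $\lambda_1$, either via parametric-LP tracking of the dual as $\lambda$ decreases from $\lambda_2$ to $\lambda_1$ or via a matching primal construction.
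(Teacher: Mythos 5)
Your cycle argument is sound up to a point, but it stops exactly where the theorem actually lives, and neither of your proposed repairs is carried out. Concretely: from the circulation decomposition of $\pi_{\lambda_1}-\pi_{\lambda_2}$ and the two optimality inequalities you correctly derive $\sum_{\gC^+}g_{ij}\ge\sum_{\gC^-}g_{ij}\ge g_{i_0,j^*}=2(\lambda_2-\lambda_1)$, and since the unique other $\gC$-edge at row $i_0$ contributes $g=0$, some other ``$+$''-edge $(i_1,j'')$ must have $g_{i_1,j''}>0$, i.e.\ $C_{i_1,j''}>2\lambda_1$ and $\pi_{\lambda_1}(i_1,j'')>0$, so $\bs_{\lambda_1}(i_1)>0$. (Minor slip: the inequality does not force any single edge to attain $g=2(\lambda_2-\lambda_1)$, only that the sum of $g$ over the remaining ``$+$''-edges is at least $2(\lambda_2-\lambda_1)$; fortunately $g_{i_1,j''}>0$ is all you need for $C_{i_1,j''}>2\lambda_1$.) But ``some row is an outlier at $\lambda_1$'' is not the statement to be proved, and your two exit strategies are both speculative: the iteration idea does not type-check ($i_1$ is produced as a $\lambda_1$-outlier, whereas restarting the argument requires a $\lambda_2$-outlier that fails to be a $\lambda_1$-outlier, and there is no mechanism forcing the chain to return to $i_0$); and in the dual route, dual feasibility of the capped potentials is easy but their optimality at $\lambda_1$ is precisely the open crux, and even granted optimality, $\hat u_{i_0}=\lambda_1$ at \emph{some} dual optimum does not by itself yield $\bs_{\lambda_1}(i_0)>0$ without a strict-complementarity argument. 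So the proposal as written is an honest but incomplete sketch.

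For comparison, the paper attacks the same contradiction hypothesis via $c$-cyclical monotonicity applied to a $2\times 2$ submatrix (its Lemma~\ref{lem:cost_2dim}), which is exactly your cycle inequality specialized to a cycle of length two through $i_0$: it succeeds only when there exists a single exchange row $i_1$ with $\pi_{\lambda_2}(i_1,j_1)>0$ and $\pi_{\lambda_1}(i_1,j_2)>0$, and the complementary case is explicitly left unresolved (``Case 2: Need to be proved''). Your decomposition is the natural generalization to longer alternating cycles and in that sense goes further than the paper's Case 1, but it runs into the same wall: a long cycle through $(i_0,j^*)$ yields information about \emph{other} rows on the cycle, not a contradiction localized at $i_0$. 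To actually close the argument you would need an additional structural input --- for instance, a parametric/monotonicity analysis of the dual potentials $u_i(\lambda)$ as $\lambda$ decreases, or a careful choice of the cycle (e.g.\ one in which every ``$+$''-edge other than $(i_0,j')$ also has $g=0$) --- and you should flag that this step is missing rather than present the result as established.
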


\begin{proof}
Fix $\lambda_1 < \lambda_2$. Note that for any $\pi \in  \Pi(\mu, \nu)$ we have: 
\begin{align*}
    \langle C_{\lambda_2} - C_{\lambda_1}, \pi \rangle & = \sum_{(i, j) \in I_{\lambda_1} \cap \cI_{\lambda_2}^c} (C(i, j) - 2\lambda_1) \pi(i, j) \\
    & \qquad \qquad + 2(\lambda_2 - \lambda_1)\sum_{(i, j) \in \cI_{\lambda_2}} \pi(i, j) \\
    & := T_1(\pi) + T_2(\pi) \,.
\end{align*}
Now as $\pi_{\lambda_2}$ is optimal with respect to $C_{\lambda_2}$ and $\pi_{\lambda_1}$ is optimal with respect to $C_{\lambda_1}$ we have: 
\begin{align*}
    & \langle C_{\lambda_1}, \pi_{\lambda_2} \rangle + T_1(\pi_{\lambda_2}) + T_2(\pi_{\lambda_2}) \\
    & = \langle C_{\lambda_2} , \pi_{\lambda_2} \rangle \\
    & \le \langle C_{\lambda_2} , \pi_{\lambda_1} \rangle \\
    & = \langle C_{\lambda_1}, \pi_{\lambda_1} \rangle + T_1(\pi_{\lambda_1}) + T_2(\pi_{\lambda_1}) \\
    & \le \langle C_{\lambda_1}, \pi_{\lambda_2} \rangle + T_1(\pi_{\lambda_1}) + T_2(\pi_{\lambda_1})
\end{align*}
Therefore we have: 
\begin{equation}
    \label{eq:ineq_1_bound}
    T_1(\pi_{\lambda_2}) + T_2(\pi_{\lambda_2}) \le T_1(\pi_{\lambda_1}) + T_2(\pi_{\lambda_1}) \,.
\end{equation}
\textbf{But this is not enough.} Note that we can further decompose $T_1$ (and similarly $T_2$) as: 
\begin{align*}
T_{1,  i}(\pi) & = \sum_{j : (i, j) \in I_{\lambda_1} \cap \cI_{\lambda_2}^c} \left(C_{i, j} - 2\lambda_1\right) \pi(i, j) \\
T_{2, i}(\pi) &=  2(\lambda_2 - \lambda_1)\sum_{j : (i, j) \in \cI_{\lambda_2}} \pi(i ,j) \,.
\end{align*}
Hence we have: 
$$
T_1(\pi) = \sum_{i=1}^n T_{1, i}(\pi), \ \ \ T_2(\pi)= \sum_{i=1}^n T_{2, i}(\pi) \,.
$$
In \eqref{eq:ineq_1_bound} we have established that $T_1(\pi_{\lambda_2}) + T_2(\pi_{\lambda_2}) \le T_1(\pi_{\lambda_1}) + T_2(\pi_{\lambda_1}) \,.$ In addition if we can show that: 
\begin{equation}
    \label{eq:ineq_2_bound}
    T_{1, i_0}(\pi_{\lambda_1}) + T_{2, i_0}(\pi_{\lambda_1}) = 0 \implies T_{2, i_0}(\pi_{\lambda_1}) = 0  \,.
\end{equation}
holds for all $1 \le i \le n$ then we are done. This is because, suppose $\bs_{\lambda_1}(i_0) = 0$, Then $$
T_{1, i_0}(\pi_{\lambda_1}) + T_{2, i_0}(\pi_{\lambda_1}) = 0 \,.
$$
This in turn by \eqref{eq:ineq_2_bound} implies 
$$ 
T_{2,i_0}(\pi_{\lambda_2}) = 0 \,,
$$ 
i.e. $\bs_{\lambda_2}(i_0) = 0$.

\begin{lemma}
\label{lem:cost_2dim}
Suppose $C$ is a $2 \times 2$ cost matrix with all unequal cost: 
$$
C = \begin{bmatrix}
C_{11} & C_{12} \\
C_{21} & C_{22}
\end{bmatrix}
$$
If $C_{22} > 2\lambda_2$ and $C_{21} < 2 \lambda_1$, then the following two inequalities won't occur simultaneously: 
\begin{align*}
    C^{\lambda_2}_{11} + C^{\lambda_2}_{22} & \le C^{\lambda_2}_{12} + C^{\lambda_2}_{21} \,,\\
    C^{\lambda_1}_{12} + C^{\lambda_1}_{21} & \le C^{\lambda_1}_{11} + C^{\lambda_1}_{22} \,.
\end{align*}
\end{lemma}
Now suppose $i_2$ is an outlier with respect to $\lambda_2$ but not with respect to $\lambda_1$. Then there exists $j_1$ and $j_2$ ($j_1 \neq j_2$) such that: 
$$
C_{i_2, j_1} < 2\lambda_1, C_{i_2, j_2} > 2\lambda_2
$$ 
such that $\pi^{\lambda_2}_{i_2 ,j_2} > 0$ and $\pi^{\lambda_1}_{i_2, j_1} > 0$. 

\paragraph{Case 1: }Now assume that we can find $i_1 \neq i_2$ such that $\pi^{\lambda_2}_{i_1, j_1} > 0$ and $\pi^{\lambda_1}_{i_1, j_2} > 0$. 

Then $(i_1, j_2), (i_2, j_1) \in \supp(\pi^{\lambda_1})$ and $(i_1, j_1), (i_2, j_2) \in \supp(\pi^{\lambda_2})$. Hence from c-cyclical monotonicity properties of the support of the optimal transport plan we have for $\pi^{\lambda_1}$:
$$
 C^{\lambda_1}_{i_1, j_2} + C^{\lambda_1}_{i_2, j_1} \le C^{\lambda_1}_{i_1, j_1} + C^{\lambda_1}_{i_2, j_2} \,,
$$
and for $\pi^{\lambda_2}$: 
$$
C^{\lambda_2}_{i_1, j_1} + C^{\lambda_2}_{i_2, j_2} \le C^{\lambda_2}_{i_1, j_2} + C^{\lambda_2}_{i_2, j_1} \,.
$$
which is a contradiction from Lemma \ref{lem:cost_2dim}. This completes the proof. 

\paragraph{Case 2: Need to be proved} Now we need to consider the other case, there does not exist any row $i_1 \neq i_2$ such that both $\pi^{\lambda_2}_{i_1, j_1} > 0$ and $\pi^{\lambda_1}_{i_1, j_2} > 0$ occur simultaneously. This means that the columns $j_1,  j_2$ are orthogonal, i.e. $\langle \pi^{\lambda_2}_{:, j_1}, \pi^{\lambda_1}_{:, j_2}\rangle = 0$. 
\end{proof}

\subsection{Proof of Lemma \ref{lem:cost_2dim}}
As $C_{21} < 2\lambda_1$ and $C_{22} > 2\lambda_2$ we can modify the inequalities in Lemma \ref{lem:cost_2dim} as: 
\begin{align}
    \label{eq:cost_ineq_1} C^{\lambda_2}_{11} + 2\lambda_2 & \le C^{\lambda_2}_{12} + C_{21} \,,\\
    \label{eq:cost_ineq_2} C^{\lambda_1}_{12} + C_{21} & \le C^{\lambda_1}_{11} + 2\lambda_1 \,.
\end{align}
Now as we have assume $C_{21} < 2\lambda_1$, from \eqref{eq:cost_ineq_1} we obtain: 
\begin{align}
    & 2\lambda_1 >  C^{\lambda_2}_{11} -  C^{\lambda_2}_{12} + 2\lambda_2 \notag \\
    \iff & 2(\lambda_2 - \lambda_1) < C^{\lambda_2}_{12} - C^{\lambda_2}_{11} \,.
\end{align} 
Hence $C^{\lambda_2}_{12} - C^{\lambda_2}_{11} > 0$, which implies $C_{11} < C_{12}, C_{11} < 2\lambda_2$ and also both $C_{11}$ and $C_{12}$ can not lie within $(2\lambda_1, 2\lambda_2)$. We divide the rest of the proofs into four small cases: 
\paragraph{Case 1:} Assume $2\lambda_1 < C_{11} < 2\lambda_2, C_{12} > 2\lambda_2$. In this case from \eqref{eq:cost_ineq_1} we have: 
\begin{align*}
    C_{11} + 2\lambda_2 & \le 2\lambda_2 + C_{21}
\end{align*}
i.e. $C_{21} \ge C_{11}$ which is not possible as $C_{11} > 2\lambda_1$ and $C_{21} < 2\lambda_1$.

\paragraph{Case 2:} Assume $C_{11} < 2\lambda_1, C_{12} > 2\lambda_2$. Then from \eqref{eq:cost_ineq_2} we have $C_{21} \le C_{11}$ and from \eqref{eq:cost_ineq_1} we have: $C_{11} \le C_{21}$ which cannot occur simultaneously.

\paragraph{Case 3:} Assume $C_{11} < 2\lambda_1$ and $2\lambda_1 < C_{12} < 2\lambda_2$. Then from \eqref{eq:cost_ineq_1} and \eqref{eq:cost_ineq_2} we have respectively: 
\begin{align*}
    C_{11} + 2\lambda_2 & \le C_{12} + C_{21} \,,\\
    2\lambda_1 + C_{21} & \le C_{11} + 2\lambda_1
\end{align*}
From the second inequality we have $C_{21} \le C_{11}$, which putting back in the first inequality yields: 
$$
C_{11} + 2\lambda_2 \le C_{12} + C_{11} \implies C_{12} \ge 2\lambda_2
$$
which is a contradiction.

\paragraph{Case 4: } Assume $C_{11} < 2\lambda_1$ and $C_{12} < 2\lambda_1$. This form \eqref{eq:cost_ineq_1} yields: 
\begin{align}
    C_{11} + 2\lambda_2 & \le C_{12} + C_{21} \notag  \\
    \label{eq:cost_ineq_3} \implies C_{21} & \ge C_{11} - C_{12} + 2\lambda_2 \,.
\end{align}
Also from \eqref{eq:cost_ineq_2} we have: 
\begin{align}
    C_{12} + C_{21} & \le C_{11} + 2\lambda_1 \notag  \\
    \label{eq:cost_ineq_4} \implies C_{21} & \le C_{11} - C_{12} + 2\lambda_1 \,.
\end{align}
From \eqref{eq:cost_ineq_3} and \eqref{eq:cost_ineq_4} we have: 
$$
C_{11} - C_{12} + 2\lambda_2 \le C_{11} - C_{12} + 2\lambda_1
$$
i.e. $\lambda_2 \le \lambda_1$ which is a contradiction. This completes the proof.

\begin{table*}[ht!]
\caption{Robust mean estimation with GANs using different distribution divergences. True mean is $\eta_0 = \mathbf{0}_5$; sample size $n=1000$; contamination proportion $\eps=0.2$. We report results over 30 experiment restarts.}
\label{table:robogan_cauchy}
\begin{center}
\begin{tabular}{lccccc}
 \toprule
 Contamination & JS Loss & SH Loss & ROBOT\\
\midrule
 $\text{Cauchy}(0.1 \cdot \mathbf{1_5}, I_5)$   & 0.2 $\pm$ 0.06 & \textbf{0.17} $\pm$ 0.04 & \textbf{0.17} $\pm$ 0.05 \\
 $\text{Cauchy}(0.5 \cdot \mathbf{1_5}, I_5)$  &   0.3 $\pm$ 0.07 & 0.26 $\pm$ 0.05 & \textbf{0.25} $\pm$ 0.05 \\
 $\text{Cauchy}(1 \cdot \mathbf{1_5}, I_5)$ & 0.45 $\pm$ 0.14 & 0.37 $\pm$ 0.06 & \textbf{0.36} $\pm$ 0.07 \\
 $\text{Cauchy}(2 \cdot \mathbf{1_5}, I_5)$ & 0.39 $\pm$ 0.3 & 0.26 $\pm$ 0.06 & \textbf{0.2} $\pm$ 0.07 \\
\bottomrule
\end{tabular}
\end{center}
\end{table*}

\section{Robust mean experiment with Cauchy distribution}
\label{sec:robot_cauchy}
In this section we present our results corresponding to the robust mean estimation with the generative distribution $g_{\theta}(x) = x + \theta$ where $x \sim \text{Cauchy}(0, 1)$. As in Subsection \ref{sec:robust_mean_est}, we assume that we have observation $\{x_1, \dots, x_n\}$ from a contaminated distribution $(1 - \eps) \ \text{Cauchy}(\eta_0, 1) + \eps \  \text{Cauchy}(\eta_1, 1)$. For our experiments we take $\eta_0 = \mathbf{0}_5$ and vary $\eta_1 \in \left\{0.1 \cdot \mathbf{1_5}, 0.5 \cdot \mathbf{1_5}, 1 \cdot \mathbf{1_5}, 2 \cdot \mathbf{1_5} \right\}$ along wth $\eps = 0.2$. We compare our method with \citet{wu2020minimax} and results are presented in Table \ref{table:robogan_cauchy}.

\newpage
\bibliography{YK,MY,Aritra}

\begin{thebibliography}{41}
\providecommand{\natexlab}[1]{#1}
\providecommand{\url}[1]{\texttt{#1}}
\expandafter\ifx\csname urlstyle\endcsname\relax
  \providecommand{\doi}[1]{doi: #1}\else
  \providecommand{\doi}{doi: \begingroup \urlstyle{rm}\Url}\fi

\bibitem[Alvarez-Melis \& Jaakkola(2018)Alvarez-Melis and
  Jaakkola]{alvarez2018gromov}
David Alvarez-Melis and Tommi~S Jaakkola.
\newblock Gromov-{W}asserstein alignment of word embedding spaces.
\newblock \emph{arXiv:1809.00013}, 2018.

\bibitem[Arjovsky et~al.(2017)Arjovsky, Chintala, and
  Bottou]{arjovsky2017Wasserstein}
Martin Arjovsky, Soumith Chintala, and L{\'e}on Bottou.
\newblock Wasserstein {{GAN}}.
\newblock \emph{arXiv:1701.07875 [cs, stat]}, January 2017.

\bibitem[Balaji et~al.(2020)Balaji, Chellappa, and Feizi]{balaji2020robust}
Yogesh Balaji, Rama Chellappa, and Soheil Feizi.
\newblock Robust optimal transport with applications in generative modeling and
  domain adaptation.
\newblock \emph{Advances in Neural Information Processing Systems}, 33, 2020.

\bibitem[Bassetti et~al.(2006)Bassetti, Bodini, and
  Regazzini]{bassetti2006minimum}
Federico Bassetti, Antonella Bodini, and Eugenio Regazzini.
\newblock On minimum {K}antorovich distance estimators.
\newblock \emph{Statistics \& Probability Letters}, 76\penalty0 (12):\penalty0
  1298--1302, 2006.

\bibitem[Breunig et~al.(2000)Breunig, Kriegel, Ng, and Sander]{breunig2000lof}
Markus~M Breunig, Hans-Peter Kriegel, Raymond~T Ng, and J{\"o}rg Sander.
\newblock Lof: identifying density-based local outliers.
\newblock In \emph{Proceedings of the 2000 ACM SIGMOD international conference
  on Management of data}, pp.\  93--104, 2000.

\bibitem[Caffarelli \& McCann(2010)Caffarelli and McCann]{caffarelli2010free}
Luis~A Caffarelli and Robert~J McCann.
\newblock Free boundaries in optimal transport and {M}onge-{A}mpere obstacle
  problems.
\newblock \emph{Annals of Mathematics}, pp.\  673--730, 2010.

\bibitem[Chao et~al.(2018)Chao, Yuan, and Weizhi]{chao2018robust}
Gao Chao, Yao Yuan, and Zhu Weizhi.
\newblock Robust estimation via generative adversarial networks.
\newblock In \emph{International Conference on Learning Representations}, 2018.

\bibitem[Chizat et~al.(2018)Chizat, Peyr{\'e}, Schmitzer, and
  Vialard]{chizat2018scaling}
Lenaic Chizat, Gabriel Peyr{\'e}, Bernhard Schmitzer, and Fran{\c{c}}ois-Xavier
  Vialard.
\newblock Scaling algorithms for unbalanced optimal transport problems.
\newblock \emph{Mathematics of Computation}, 87\penalty0 (314):\penalty0
  2563--2609, 2018.

\bibitem[Chizat.(2017)]{UOT}
Lenaïc Chizat.
\newblock Unbalanced optimal transport: Models, numerical methods,
  applications.
\newblock \emph{Numerical Analysis [math.NA]. Université Paris sciences et
  lettres}, 2017.

\bibitem[Courty et~al.(2014)Courty, Flamary, and Tuia]{courty2014domain}
Nicolas Courty, R{\'e}mi Flamary, and Devis Tuia.
\newblock Domain adaptation with regularized optimal transport.
\newblock In \emph{Joint European Conference on Machine Learning and Knowledge
  Discovery in Databases}, pp.\  274--289. Springer, 2014.

\bibitem[Courty et~al.(2017)Courty, Flamary, Habrard, and
  Rakotomamonjy]{courty2017joint}
Nicolas Courty, R{\'e}mi Flamary, Amaury Habrard, and Alain Rakotomamonjy.
\newblock Joint distribution optimal transportation for domain adaptation.
\newblock In \emph{Advances in Neural Information Processing Systems}, pp.\
  3730--3739, 2017.

\bibitem[Cuturi(2013)]{cuturi2013Sinkhorn}
Marco Cuturi.
\newblock Sinkhorn {{Distances}}: {{Lightspeed Computation}} of {{Optimal
  Transport}}.
\newblock In C.~J.~C. Burges, L.~Bottou, M.~Welling, Z.~Ghahramani, and K.~Q.
  Weinberger (eds.), \emph{Advances in {{Neural Information Processing
  Systems}} 26}, pp.\  2292--2300. {Curran Associates, Inc.}, 2013.

\bibitem[Diamond \& Boyd(2016)Diamond and Boyd]{diamond2016cvxpy}
Steven Diamond and Stephen Boyd.
\newblock {CVXPY}: {A} {P}ython-embedded modeling language for convex
  optimization.
\newblock \emph{Journal of Machine Learning Research}, 17\penalty0
  (83):\penalty0 1--5, 2016.

\bibitem[Figalli(2010)]{figalli2010optimal}
Alessio Figalli.
\newblock The optimal partial transport problem.
\newblock \emph{Archive for Rational Mechanics and Analysis}, 195\penalty0
  (2):\penalty0 533--560, 2010.

\bibitem[Flamary \& Courty(2017)Flamary and Courty]{flamary2017pot}
R{\'e}mi Flamary and Nicolas Courty.
\newblock {POT} {P}ython optimal transport library, 2017.
\newblock URL \url{https://pythonot.github.io/}.

\bibitem[Genevay et~al.(2016)Genevay, Cuturi, Peyr{\'e}, and
  Bach]{genevay2016stochastic}
Aude Genevay, Marco Cuturi, Gabriel Peyr{\'e}, and Francis Bach.
\newblock Stochastic optimization for large-scale optimal transport.
\newblock In \emph{Advances in Neural Information Processing Systems}, pp.\
  3440--3448, 2016.

\bibitem[Ho et~al.(2017)Ho, Nguyen, Yurochkin, Bui, Huynh, and
  Phung]{ho2017multilevel}
Nhat Ho, XuanLong Nguyen, Mikhail Yurochkin, Hung~Hai Bui, Viet Huynh, and Dinh
  Phung.
\newblock Multilevel clustering via {W}asserstein means.
\newblock In \emph{International Conference on Machine Learning}, pp.\
  1501--1509, 2017.

\bibitem[Huang et~al.(2016)Huang, Guo, Kusner, Sun, Sha, and
  Weinberger]{huang2016supervised}
Gao Huang, Chuan Guo, Matt~J Kusner, Yu~Sun, Fei Sha, and Kilian~Q Weinberger.
\newblock Supervised word mover's distance.
\newblock In \emph{Advances in Neural Information Processing Systems}, pp.\
  4862--4870, 2016.

\bibitem[Huber \& Ronchetti(2009)Huber and Ronchetti]{huber2009Robust}
Peter~J. Huber and Elvezio Ronchetti.
\newblock \emph{Robust Statistics}.
\newblock Wiley Series in Probability and Statistics. {Wiley}, {Hoboken, N.J},
  2nd ed edition, 2009.
\newblock ISBN 978-0-470-12990-6.

\bibitem[Kantorovich(1942)]{kantorovich1942translocation}
Leonid~Vitalievich Kantorovich.
\newblock On the translocation of masses.
\newblock In \emph{Dokl. Akad. Nauk. USSR (NS)}, volume~37, pp.\  199--201,
  1942.

\bibitem[Kusner et~al.(2015)Kusner, Sun, Kolkin, and
  Weinberger]{kusner2015word}
Matt Kusner, Yu~Sun, Nicholas Kolkin, and Kilian Weinberger.
\newblock From {{Word Embeddings To Document Distances}}.
\newblock In \emph{International {{Conference}} on {{Machine Learning}}}, pp.\
  957--966, June 2015.

\bibitem[Liero et~al.(2018)Liero, Mielke, and Savar{\'e}]{liero2018optimal}
Matthias Liero, Alexander Mielke, and Giuseppe Savar{\'e}.
\newblock Optimal entropy-transport problems and a new
  {H}ellinger--{K}antorovich distance between positive measures.
\newblock \emph{Inventiones Mathematicae}, 211\penalty0 (3):\penalty0
  969--1117, 2018.

\bibitem[Liu et~al.(2008)Liu, Ting, and Zhou]{liu2008isolation}
Fei~Tony Liu, Kai~Ming Ting, and Zhi-Hua Zhou.
\newblock Isolation forest.
\newblock In \emph{2008 eighth ieee international conference on data mining},
  pp.\  413--422. IEEE, 2008.

\bibitem[Monge(1781)]{monge1781Memoire}
Gaspard Monge.
\newblock \emph{{M\'emoire sur la th\'eorie des d\'eblais et des remblais}}.
\newblock {De l'Imprimerie Royale}, 1781.

\bibitem[Nowozin et~al.(2016)Nowozin, Cseke, and Tomioka]{nowozin2016f}
Sebastian Nowozin, Botond Cseke, and Ryota Tomioka.
\newblock f-{GAN}: Training generative neural samplers using variational
  divergence minimization.
\newblock In \emph{Advances in Neural Information Processing Systems}, pp.\
  271--279, 2016.

\bibitem[Pedregosa et~al.(2011)Pedregosa, Varoquaux, Gramfort, Michel, Thirion,
  Grisel, Blondel, Prettenhofer, Weiss, Dubourg, Vanderplas, Passos,
  Cournapeau, Brucher, Perrot, and Duchesnay]{scikit-learn}
F.~Pedregosa, G.~Varoquaux, A.~Gramfort, V.~Michel, B.~Thirion, O.~Grisel,
  M.~Blondel, P.~Prettenhofer, R.~Weiss, V.~Dubourg, J.~Vanderplas, A.~Passos,
  D.~Cournapeau, M.~Brucher, M.~Perrot, and E.~Duchesnay.
\newblock Scikit-learn: Machine learning in {P}ython.
\newblock \emph{Journal of Machine Learning Research}, 12:\penalty0 2825--2830,
  2011.

\bibitem[Pele \& Werman(2009)Pele and Werman]{pele2009fast}
Ofir Pele and Michael Werman.
\newblock Fast and robust earth mover's distances.
\newblock In \emph{2009 IEEE 12th International Conference on Computer Vision},
  pp.\  460--467. IEEE, 2009.

\bibitem[Peyr{\'e} \& Cuturi(2018)Peyr{\'e} and Cuturi]{peyre2018Computational}
Gabriel Peyr{\'e} and Marco Cuturi.
\newblock Computational {{Optimal Transport}}.
\newblock \emph{arXiv:1803.00567 [stat]}, March 2018.

\bibitem[Piccoli \& Rossi(2014)Piccoli and Rossi]{piccoli2014generalized}
Benedetto Piccoli and Francesco Rossi.
\newblock Generalized {W}asserstein distance and its application to transport
  equations with source.
\newblock \emph{Archive for Rational Mechanics and Analysis}, 211\penalty0
  (1):\penalty0 335--358, 2014.

\bibitem[Rousseeuw \& Driessen(1999)Rousseeuw and Driessen]{rousseeuw1999fast}
Peter~J Rousseeuw and Katrien~Van Driessen.
\newblock A fast algorithm for the minimum covariance determinant estimator.
\newblock \emph{Technometrics}, 41\penalty0 (3):\penalty0 212--223, 1999.

\bibitem[Sch{\"o}lkopf et~al.(1999)Sch{\"o}lkopf, Williamson, Smola,
  Shawe-Taylor, Platt, et~al.]{scholkopf1999support}
Bernhard Sch{\"o}lkopf, Robert~C Williamson, Alexander~J Smola, John
  Shawe-Taylor, John~C Platt, et~al.
\newblock Support vector method for novelty detection.
\newblock In \emph{NIPS}, volume~12, pp.\  582--588. Citeseer, 1999.

\bibitem[Seguy et~al.(2018)Seguy, Damodaran, Flamary, Courty, Rolet, and
  Blondel]{seguy2018LargeScale}
Vivien Seguy, Bharath~Bhushan Damodaran, R{\'e}mi Flamary, Nicolas Courty,
  Antoine Rolet, and Mathieu Blondel.
\newblock Large-{{Scale Optimal Transport}} and {{Mapping Estimation}}.
\newblock \emph{arXiv:1711.02283 [stat]}, February 2018.

\bibitem[Shen \& Sanghavi(2019)Shen and Sanghavi]{shen2019learning}
Yanyao Shen and Sujay Sanghavi.
\newblock Learning with bad training data via iterative trimmed loss
  minimization.
\newblock In \emph{International Conference on Machine Learning}, pp.\
  5739--5748. PMLR, 2019.

\bibitem[Solomon et~al.(2015)Solomon, De~Goes, Peyr{\'e}, Cuturi, Butscher,
  Nguyen, Du, and Guibas]{solomon2015convolutional}
Justin Solomon, Fernando De~Goes, Gabriel Peyr{\'e}, Marco Cuturi, Adrian
  Butscher, Andy Nguyen, Tao Du, and Leonidas Guibas.
\newblock Convolutional {W}asserstein distances: Efficient optimal
  transportation on geometric domains.
\newblock \emph{ACM Transactions on Graphics (TOG)}, 34\penalty0 (4):\penalty0
  1--11, 2015.

\bibitem[Srivastava et~al.(2018)Srivastava, Li, and
  Dunson]{srivastava2018scalable}
Sanvesh Srivastava, Cheng Li, and David~B. Dunson.
\newblock Scalable {{Bayes}} via {{Barycenter}} in {{Wasserstein Space}}.
\newblock \emph{arXiv:1508.05880 [stat]}, June 2018.

\bibitem[Staerman et~al.(2020)Staerman, Laforgue, Mozharovskyi, and d'Alch{\'e}
  Buc]{staerman2020ot}
Guillaume Staerman, Pierre Laforgue, Pavlo Mozharovskyi, and Florence
  d'Alch{\'e} Buc.
\newblock When {OT} meets {MOM}: Robust estimation of {W}asserstein distance.
\newblock \emph{arXiv:2006.10325}, 2020.

\bibitem[Tagasovska \& Lopez-Paz(2019)Tagasovska and
  Lopez-Paz]{tagasovska2019single}
Natasa Tagasovska and David Lopez-Paz.
\newblock Single-model uncertainties for deep learning.
\newblock \emph{Advances in Neural Information Processing Systems},
  32:\penalty0 6417--6428, 2019.

\bibitem[Tong et~al.(2020)Tong, Wolf, and Krishnaswamyt]{tong2020fixing}
Alexander Tong, Guy Wolf, and Smita Krishnaswamyt.
\newblock Fixing bias in reconstruction-based anomaly detection with lipschitz
  discriminators.
\newblock In \emph{2020 IEEE 30th International Workshop on Machine Learning
  for Signal Processing (MLSP)}, pp.\  1--6. IEEE, 2020.

\bibitem[Villani(2009)]{Villani-09}
C.~Villani.
\newblock \emph{Optimal Transport: Old and New. Grundlehren der Mathematischen
  Wissenschaften [Fundamental Principles of Mathemtical Sciences]}.
\newblock Springer, Berlin, 2009.

\bibitem[Wu et~al.(2020)Wu, Ding, Huang, and Yu]{wu2020minimax}
Kaiwen Wu, Gavin~Weiguang Ding, Ruitong Huang, and Yaoliang Yu.
\newblock On {{Minimax Optimality}} of {{GANs}} for {{Robust Mean Estimation}}.
\newblock In \emph{International {{Conference}} on {{Artificial Intelligence}}
  and {{Statistics}}}, pp.\  4541--4551, June 2020.

\bibitem[Yurochkin et~al.(2019)Yurochkin, Claici, Chien, Mirzazadeh, and
  Solomon]{yurochkin2019hierarchical}
Mikhail Yurochkin, Sebastian Claici, Edward Chien, Farzaneh Mirzazadeh, and
  Justin Solomon.
\newblock Hierarchical {{Optimal Transport}} for {{Document Representation}}.
\newblock \emph{arXiv:1906.10827 [cs, stat]}, June 2019.

\end{thebibliography}
\bibliographystyle{iclr2021_conference}

\end{document}